\documentclass[a4paper,UKenglish,cleveref, autoref, thm-restate]{lipics-v2021}

\usepackage[ruled,vlined]{algorithm2e}
\RestyleAlgo{ruled}
\usepackage{tikz}
\usepackage{mathtools}
\usetikzlibrary{decorations.pathreplacing, decorations.markings}
\usepackage{longtable}
\usepackage{numprint}

\hideLIPIcs

\title{New Approximations for Temporal Vertex Cover on Always Star Temporal Graphs}

\author{Sophia Heck}{Faculty of Computer Science, University of Vienna, Austria}{sophia.heck@univie.ac.at}{https://orcid.org/0009-0001-5971-3051}{}

\author{Eleni Akrida}{Department of Computer Science, Durham University, UK}{eleni.akrida@durham.ac.uk}{https://orcid.org/0000-0002-1126-1623}{}

\authorrunning{ }
\authorrunning{S. Heck and E. Akrida} 

\Copyright{Sophia Heck and Eleni Akrida} 

\ccsdesc[500]{Theory of computation~Dynamic graph algorithms}
\ccsdesc[500]{Theory of computation~Approximation algorithms analysis}

\keywords{Temporal networks, Temporal vertex cover, Approximation algorithm}

\acknowledgements{We thank Christian Schulz for helpful inputs and feedback.}

\nolinenumbers 

\ArticleNo{ }

\begin{document}

\newcommand{\tvcsolverrepo}{\url{https://github.com/so-9604/tvc_solver}}
\newcommand{\graphgenrepo}{\url{https://github.com/so-9604/graph_generator}}

\maketitle

\begin{abstract}
Modern real life networks are often highly dynamic. Temporal graphs represent these changes in the network configuration through discrete edge appearances on a fixed set of vertices. All these changes are known in advance until a maximal time step, the lifetime of the graph.
The classical \textit{Vertex Cover} problem  can be naturally extended in the temporal setting to the \textit{Temporal Vertex Cover (TVC)} and \textit{Sliding Window Temporal Vertex Cover (SW-TVC)}. In TVC every edge is covered  by one of their endpoints once over the whole lifetime, while in the~SW-TVC every appearing edge is covered once in every window of~$\Delta$ consecutive time steps.
In always star temporal graphs every snapshot is a star, where the center can switch in every time step.
TVC is proven to be NP-complete on always star temporal graphs, but there is an FPT algorithm parameterized by the sliding window size~$\Delta$, solving it optimally in~$\mathcal{O} (T \Delta(n + m) \cdot 2^{\Delta})$.

In this paper, we provide two polynomial-time approximation algorithms for SW-TVC on always star temporal graph, yielding ~$2\Delta-1$ and $\Delta-1$ approximation ratios in $\mathcal{O} (T)$ and $\mathcal{O} (Tm\Delta^2)$ running time, respectively; these algorithms give exact solutions for $\Delta =1$ and $\Delta\le2$, respectively. We also provide the first -- to our knowledge -- implementation and experimental evaluation of the current state-of the-art approximation algorithms, which provide a~$d$ and~$d-1$-approximation ratio, where~$d$ is the maximal degree of any snapshot.
Our experiments on artificially generated always star temporal graphs show that the new approximation algorithms outperform the known~$d-1$-approximation in terms of running time even in some cases where~$\Delta>d$.
We also test the state-of-the-art algorithms on real life data. We surprisingly conclude that the $d-1$-approximation algorithm performs better in terms of running time than the analytically better $d$-approximation algorithm, when implemented strictly as described in the paper. However, we devise a novel implementation of the $d$-approximation algorithm which runs much faster than the $d-1$-approximation in practice. In any case, the $d-1$-approximation computes smaller solutions.

\end{abstract}

\vspace*{-0.1cm}
\section{Introduction}
\label{sec:introduction}
\vspace*{-0.1cm}
Temporal graphs model systems which change over time by considering an underlying static graph whose edges appear and disappear over time. Many real-life systems, such as biological, social or technical networks are highly time-varying in their behavior, e.g.\ in the spread of diseases,  communication between individuals or the transfer of data~\cite{holme_temporal_2018, michail_elements_2018}.
For such systems, temporal graphs can be used to extract information or solve problems e.g.\ monitoring tasks~\cite{hamm_complexity_2022}. The study of temporal graphs has been discussed frequently in recent literature~\cite{CasteigtsKNP19, BeckerCCKRRZ23, MarinoS23}. 
However, classical known graph problems, e.g.,\ graph coloring or vertex cover, need to be adapted first into this time varying setting~\cite{akrida_temporal_2020, mertzios_sliding_2021, mertzios_computing_2023}.

We focus on Temporal Vertex Cover (TVC) and Sliding Window Temporal Vertex Cover (SW-TVC)~\cite{akrida_temporal_2020}, which are temporal adaptations of the classical Vertex Cover (VC). All of these aim to (temporally) cover all edges of the underlying graph by one of their endpoints.
While VC provides a set of vertex as a solution, temporal adaptations provide a set of \emph{vertex appearances}, which are vertices at defined time steps.
In TVC an edge is considered \textit{temporally covered}, if the solution set contains a vertex appearance, where the vertex is an endpoint of the edge, and the edge appears at that time step. Hence, every edge is covered at one appearance during the entire lifetime of the graph. The goal is then to cover all edges with the minimum set size.
This may not be sufficient for applications where, for example, repeated monitoring is required. Therefore, SW-TVC considers a window of fixed size~$\Delta$ and each appearing edge needs to be covered in every~$\Delta$ consecutive time steps.
A practical example of a use case is the monitoring of tasks in sensor networks, where the sensors are vertices, and an edge appears at a time step when the two sensors communicate. If we know when sensors are supposed to send signals, we might want to check repeatedly if the signals work, without checking every single appearance. We can use SW-TVC to receive a set of vertex appearances to control that. By setting a fixed size,~$\Delta$, of the time window to monitor one can define a timespan, in which every appearing communication is tested at least once.
Both adaptations, TVC and SW-TVC, are known to be NP-complete~\cite{akrida_temporal_2020}.
Therefore, approximation and input restriction are commonly used tools to provide solutions in polynomial time~\cite{akrida_temporal_2020, hamm_complexity_2022}.
Algorithms for star temporal graphs have been considered for problems such as temporal exploration \cite{akrida_temporal_2021}, but not yet for (SW-)TVC. So far, only (approximation) algorithms for (SW-)TVC on (always) path/cycle \cite{hamm_complexity_2022} and always degree at most $d$ \cite{akrida_temporal_2020} temporal graphs have been presented. Here we consider approximations for (SW-)TVC on always star temporal graphs for the first time. 
Moreover, by presenting the first implementations, this paper serves as a pioneering effort in bridging the gap between theory and practice, thereby enabling a variety of applications as star-structures emerge in networks such as body sensor networks \cite{Kwak09}, invitations in online social networks \cite{KumarNT06} or network protocols with elected leaders \cite{Nakano02}.

We present the preliminaries in the next section and related work in the field of temporal graphs in Section~\ref{sec:rel_work}. 
Section~\ref{sec:star_approx} provides our main contribution, the presentation of the two new approximations for always star temporal graphs, together with the proofs for running time and approximation ratios. The two new algorithms achieve approximation ratios of ~$2\Delta-1$ and $\Delta-1$ in $\mathcal{O} (T)$ and $\mathcal{O} (Tm\Delta^2)$, respectively.
We also present a framework allowing computation of ~$\Delta-$TVC with the state-of-the-art $d$~\cite{akrida_temporal_2020} and $d-1$~\cite{hamm_complexity_2022} approximation algorithms, and the two proposed ones. We show through experiments in Section~\ref{sec:exp} that our algorithms provide better approximations faster on always star temporal graphs than the known algorithms. Moreover, on general temporal graphs we test the performance of the state-of-the-art algorithms with real-life instances from the SNAP-library~\cite{leskovec_snap_2014} and show that the analytically better $d$-approximation algorithm performs actually worse than the $d-1$-approximation algorithm in solution size and running time. However, we also present a way of implementing the $d$-approximation such that the analytical expectations are met. 

\vspace*{-0.1cm}
\section{Preliminaries}
\label{sec:fund}

\vspace*{-0.1cm}
\subparagraph*{Graphs and Temporal Graphs.} We consider a graph~$G=(V,E)$ to have a finite set of vertices,~$V$, and a finite set of edges, $E$. Let~$n=|V|$ and $m=|E|$.
We use undirected edges~$e=(u,w)$ where $u,w$ are the endpoints of edge $e$. 
The degree~$d$ of a vertex~$v$ is the number of edges connected to~$v$, $d(v)=\mid\{(u,w)\in E | u=v \text{ or } w=v\}\mid$. Two vertices~$u$,~$w$ are said to be adjacent if~$(u,w) \in E$.
Temporal graphs are graphs whose structure changes over time. Specifically, we consider time to be discrete and allow changes in the edges that are active (i.e. available) in each time step, while the set of vertices remains fixed.
\vspace*{-0.05cm}
\begin{definition}[Temporal graphs]
    \label{def:temp_graph}
    A temporal graph is a pair~$(G, \lambda)$, where~$G = (V , E)$ is an underlying (static) graph and~$\lambda : E \rightarrow 2^\mathbb{N}$ is a time-labeling function which assigns to every edge of~$G$ a set of discrete-time labels.
\end{definition}
\vspace*{-0.05cm}
The timespan of a temporal graph is bounded by a maximal time step~$T$, the so-called \emph{lifetime} of the graph. 
An edge~$e \in E$ is called \textit{active} at time step~$t$ if it appears in that time step, i.e.\ ~$t \in \lambda(e)$.
There are two commonly used ways to represent temporal changes visually, either by considering the subgraph of~$G$ that is active at every time step (\emph{snapshot} of $G$ at time $i$, $G_i$), see Figure~\ref{pic:intro_temp_graph_visual:a}, or by visualizing the underlying structure of the graph and labelling each edge with the time steps in which it is active, see Figure~\ref{pic:intro_temp_graph_visual:b}.
\begin{figure}[!bt]
    \centering
    \subfloat[\footnotesize{Snapshot view of the temporal graph}\label{pic:intro_temp_graph_visual:a}]{\resizebox{0.5\textwidth}{!}{
            \begin{tikzpicture}[main/.style = {draw, circle}]
            \node[draw,thick,minimum width=3cm,minimum height=3.5cm] (box) at (0,0)  {};
            \node[draw,thick,minimum width=3cm,minimum height=0.5cm] (box) at (0,-2)  {1};
            \node[main, minimum size=0.5cm] (a) at (-1,1) {\footnotesize$a$};
            \node[main, minimum size=0.5cm] (b) at (0,1)  {\footnotesize$b$};
            \node[main, minimum size=0.5cm] (c) at (1,0)  {\footnotesize$c$};
            \node[main, minimum size=0.5cm] (d) at (0,-1) {\footnotesize$d$};
        
            \path (a) edge[red] (d);
            \path (b) edge[blue] (a);
        \end{tikzpicture}
        \begin{tikzpicture}[main/.style = {draw, circle}]
            \node[draw,thick,minimum width=3cm,minimum height=3.5cm] (box) at (0,0)  {};
            \node[draw,thick,minimum width=3cm,minimum height=0.5cm] (box) at (0,-2)  {2};
            \node[main, minimum size=0.5cm] (a) at (-1,1) {\footnotesize$a$};
            \node[main, minimum size=0.5cm] (b) at (0,1)  {\footnotesize$b$};
            \node[main, minimum size=0.5cm] (c) at (1,0)  {\footnotesize$c$};
            \node[main, minimum size=0.5cm] (d) at (0,-1) {\footnotesize$d$};
        
            \path (a) edge[red] (d);
            \path (b) edge[violet] (d);
            \path (c) edge[orange] (d);
        \end{tikzpicture}
        \begin{tikzpicture}[main/.style = {draw, circle}]
            \node[draw,thick,minimum width=3cm,minimum height=3.5cm] (box) at (0,0)  {};
            \node[draw,thick,minimum width=3cm,minimum height=0.5cm] (box) at (0,-2)  {3};
            \node[main, minimum size=0.5cm] (a) at (-1,1) {\footnotesize$a$};
            \node[main, minimum size=0.5cm] (b) at (0,1)  {\footnotesize$b$};
            \node[main, minimum size=0.5cm] (c) at (1,0)  {\footnotesize$c$};
            \node[main, minimum size=0.5cm] (d) at (0,-1) {\footnotesize$d$};
        
            \path (c) edge[orange] (d);
            \path (b) edge[blue] (a);
        \end{tikzpicture}

            }
        }
    \hfill
    \subfloat[\footnotesize{Label view of the temporal graph}\label{pic:intro_temp_graph_visual:b}]{\resizebox{0.4\textwidth}{!}{
            \begin{tikzpicture}[main/.style = {draw, circle}]
            \node[main, minimum size=0.5cm] (a) at (-2,1) {\footnotesize$a$};
            \node[main, minimum size=0.5cm] (b) at (0,1)  {\footnotesize$b$};
            \node[main, minimum size=0.5cm] (c) at (2,0)  {\footnotesize$c$};
            \node[main, minimum size=0.5cm] (d) at (0,-1) {\footnotesize$d$};
        
            \path (a) edge[red] node[left]{$1,2$} (d);
            \path (b) edge[blue] node[below]{$1,3$} (a);
            \path (b) edge[violet] node[left]{$2$} (d);
            \path (c) edge[orange] node[below]{$2,3$} (d);
        \end{tikzpicture}
        }
        }
    \caption{Visualizations of temporal graphs}
    \label{pic:intro_temp_graph_visual}
    \vspace*{-0.6cm}
\end{figure}

\vspace{-0.3cm}
\subparagraph*{Temporal Vertex Cover.}
In the Temporal Vertex Cover (TVC) problem, we aim to cover each underlying edge by one of its endpoints through a set of vertices at certain time steps. The problem is an adaptation of the classical (static) Vertex Cover (VC) problem, which aims to obtain a minimum-size set of vertices such that at least one endpoint of every edge is included in the set.
The temporal adaptation of VC that aims to cover every edge of the underlying graph (at least) once over the lifetime of the graph. The solution now consists of vertex appearances rather than vertices: a \emph{vertex appearance} or \emph{temporal vertex}~$(u, t)$ describes a vertex~$u \in V$ at a certain time step~$t \in [1, T]$. An edge~$e$ is considered \textit{temporally covered} by a vertex appearance~$(u,t)$, when~$u$ is an endpoint of~$e$ and~$e$ is active at~$t$, i.e.,\ $t \in \lambda(e)$. Hence,~$(u,t)$ covers all edges adjacent to~$u$, which are active at~$t$.
\vspace*{-0.05cm}
\begin{definition}[Temporal Vertex Cover (TVC)~\cite{akrida_temporal_2020}]
    \label{def:tvc}
    Given a temporal graph~$(G, \lambda)$ with lifetime $T$, find a temporal vertex subset~$S\subseteq \{(u,t): u \in V, 1 \leq t \leq T\}$ of~$(G, \lambda)$ of minimum cardinality such that every edge~$e \in E$ is temporally covered by at least one vertex appearance~$(v,t) \in S$.
\end{definition}
\vspace*{-0.05cm}
For most applications, it may not be sufficient to cover each edge only once over the whole lifetime, e.g.,\ if one considers monitoring tasks one wishes a regularly repeated manner of coverage. A commonly used way to face this is to consider a \textit{sliding window}~\cite{akrida_temporal_2020} and look for solutions that satisfy the requirements of the problem within each such window. A window with specific size~$\Delta$ restricts the temporal graph on the specified~$\Delta$ consecutive time steps. It `slides' over the lifetime, such that we have a window~$W_i$ starting in every time step~$i \in [1,T-\Delta]$, see Figure~\ref{pic:intro_valid_swtvc} where the size of the window is two time steps. The time window~$W_t$ can then be defined as the set of time labels starting at~$t$ and covering all time steps until~$t+\Delta-1$.
The TVC requirements can be tightened by specifying that every edge should be temporally covered at least once in every window in which the edge is active.  Let~$E[W_t]$ denote the set of appearing edges in a time window, i.e.,\ $E[W_t] = \{e \in E ~|~ \lambda(e)\cap W_t \neq \emptyset\}$. We say that vertex appearance~$(w,t')$ is in the time window~$W_t$ if~$t'\in W_t$.
\vspace*{-0.05cm}
\begin{definition}[Sliding Window Temporal Vertex Cover (SW-TVC)~\cite{akrida_temporal_2020}]
    \label{def:swtvc}
    Given a temporal graph~$(G, \lambda)$ with lifetime $T$, find a sliding~$\Delta$-window temporal vertex cover of~$(G, \lambda)$ of minimum size.
    That is, find a temporal vertex subset~$S\subseteq \{(u,t): u \in V, 1 \leq t \leq T\}$ of~$(G, \lambda)$ of minimum cardinality such that for every time window~$W_t$ and for every edge~$e \in E[W_t]$,~$e$ is temporally covered by a vertex appearance~$(w,t') \in S$ in~$W_t$.
\end{definition}
\vspace*{-0.05cm}
Through the parameterization with~$\Delta$, the sliding window model is more versatile and even includes the TVC as the special case~$\Delta=T$, as this would be equivalent to considering the problem over the total lifetime~$T$.
When the sliding window size is a fixed constant $\Delta$, we also refer to SW-TVC as $\Delta$-TVC (i.e.,~$\Delta$ is now a part of the problem name). An example of a 2-TVC is shown in Figure~\ref{pic:intro_valid_swtvc}. Since the edge~$(a,b)$ appears non-overlapping in both windows, we need to cover it separately e.g. by including $(a,1)$ and $(a,3)$ in the solution.

\begin{figure}[!bt]
    \center
    \resizebox{0.55\textwidth}{!}{
        \begin{tikzpicture}[main/.style = {draw, circle}]
        \begin{scope}[xshift=-100]
            \node[draw,thick,minimum width=3cm,minimum height=3.5cm] (box) at (0,0)  {};
            \node[draw,thick,minimum width=3cm,minimum height=0.5cm] (box) at (0,-2)  {1};
            \node[main, minimum size=0.5cm, color=white, fill=gray] (a) at (-1,1) {\footnotesize$a$};
            \node[main, minimum size=0.5cm] (b) at (0,1)  {\footnotesize$b$};
            \node[main, minimum size=0.5cm] (c) at (1,0)  {\footnotesize$c$};
            \node[main, minimum size=0.5cm] (d) at (0,-1) {\footnotesize$d$};
    
            \path (a) edge (d);
            \path (b) edge (a);
        \end{scope}
        \begin{scope}[xshift=0]
            \node[draw,thick,minimum width=3cm,minimum height=3.5cm] (box) at (0,0)  {};
            \node[draw,thick,minimum width=3cm,minimum height=0.5cm] (box) at (0,-2)  {2};
            \node[main, minimum size=0.5cm] (a) at (-1,1) {\footnotesize$a$};
            \node[main, minimum size=0.5cm] (b) at (0,1)  {\footnotesize$b$};
            \node[main, minimum size=0.5cm] (c) at (1,0)  {\footnotesize$c$};
            \node[main, minimum size=0.5cm, color=white, fill=gray] (d) at (0,-1) {\footnotesize$d$};
    
            \path (a) edge (d);
            \path (b) edge (d);
            \path (c) edge (d);
        \end{scope}
        \begin{scope}[xshift=100]
            \node[draw,thick,minimum width=3cm,minimum height=3.5cm] (box) at (0,0)  {};
            \node[draw,thick,minimum width=3cm,minimum height=0.5cm] (box) at (0,-2)  {3};
            \node[main, minimum size=0.5cm, color=white, fill=gray] (a) at (-1,1) {\footnotesize$a$};
            \node[main, minimum size=0.5cm] (b) at (0,1)  {\footnotesize$b$};
            \node[main, minimum size=0.5cm] (c) at (1,0)  {\footnotesize$c$};
            \node[main, minimum size=0.5cm] (d) at (0,-1) {\footnotesize$d$};
    
            \path (c) edge (d);
            \path (b) edge (a);
        \end{scope}
    
        \node[draw,thick,red,minimum width=7cm,minimum height=4.5cm] (box) at (-1.75,-0.35)  {};
        \node[red] (label) at (-1.75,-2.85)  {$W_1$};
    
        \node[draw,thick,red,minimum width=7cm,minimum height=4.5cm] (box) at (1.75,-0.15)  {};
        \node[red] (label) at (1.75,2.3)  {$W_2$};
    \end{tikzpicture}
    }

    \caption{Sliding window temporal vertex cover with window size $\Delta=2$ on a graphs with lifetime $T=3$. The set of temporal vertices $\{(a,1), (d,2), (a,3)\}$ provides a valid 2-TVC of minimum size where all edge appearances are temporally covered in every time window.}
    \label{pic:intro_valid_swtvc}
    \vspace*{-0.5cm}
\end{figure}
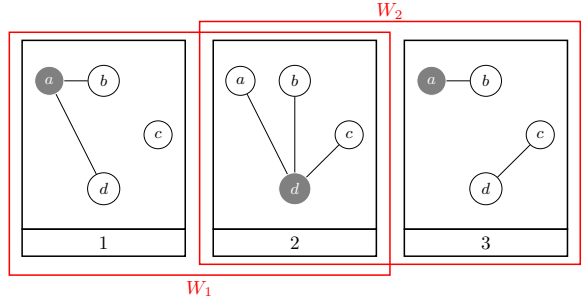

SW-TVC is known to be NP-hard~\cite{akrida_temporal_2020}. Techniques to find solutions for such problems in polynomial time are to approximate the solution or to restrict the inputs to a specific graph class in order to achieve better results. By using the properties of restricted inputs the solution quality and/or running time can be improved.
In terms of restriction there are two general approaches of adapting static graph classes into a temporal setting. For a class~$\mathcal{X}$ of static graphs, a temporal graph $(G,\lambda)$ is called \textit{$\mathcal{X}$ temporal graph} or \textit{underlying~$\mathcal{X}$ temporal graph} if~$G \in \mathcal{X}$, and it is called \textit{always~$\mathcal{X}$ temporal graph}, if~$G_i \in \mathcal{X}$ for every~$i \in [T] = \{1,2,\dots,T\}$~\cite{akrida_temporal_2020}. In this paper we focus on always star temporal graphs, where every snapshot is a star graph, but the center may vary in each time step and always at most degree~$d$ temporal graphs, where every snapshot has degree at most~$d$.
\vspace*{-0.3cm}

\subparagraph*{Approximation Algorithms.} In approximation algorithms, for any input~$I$ the quality of the solution $x$ never exceeds the optimal solution $x^\star$ by the approximation ratio~$\rho=\sup_{I} \frac{f(x( I ))}{f (x^\star ( I ))}$. Quality is determined by the objective function~$f$; for (SW-)TVC $f$ computes the cover size.
\vspace*{-0.2cm}

\section{Related Work}
\label{sec:rel_work}
In the literature, temporal graphs appear under different names, which can also refer to different considered models. We use a popular model where the set of vertices is fixed over time and the edges appear and disappear over time. Besides \textit{temporal}~\cite{akrida_temporal_2020, hamm_complexity_2022} such graphs are also called \textit{evolving}~\cite{bui-xuan_computing_2003, ferreira_building_2004}, \textit{dynamic}~\cite{bhattacharya_deterministic_2018, giakkoupis_randomized_2014,hanauer_recent_2022, MaliMPZ13} or \textit{time-varying}~\cite{casteigts_time-varying_2012, tang_small-world_2010, viard_computing_2016}.
Additionally to path-related temporal problems, which have been studied much in literature ~\cite{akrida_temporal_2021, bui-xuan_computing_2003, fluschnik_temporal_2020, kempe_connectivity_2002, michail_traveling_2016, OettershagenMK22, wu_path_2014}, non-path related temporal problems have recently received more attention; examples include temporal graph coloring~\cite{mertzios_sliding_2021},~$\Delta$-cliques~\cite{viard_computing_2016}, temporal spanners~\cite{casteigts_temporal_2021} and the temporal vertex cover~\cite{akrida_temporal_2020, hamm_complexity_2022}.
The literature considers different techniques of adapting classical problems into a temporal setting.
In particular, there are three commonly used ones: the consideration of the problem over the whole lifetime~\cite{akrida_temporal_2020, mertzios_sliding_2021}; the \emph{sliding window} technique~\cite{akrida_temporal_2020,mertzios_sliding_2021,viard_revealing_2015,viard_computing_2016}; and the maintaining of a correct solution in every time step~\cite{bhattacharya_deterministic_2018, hanauer_recent_2022}. The first two have already been mentioned in Section~\ref{sec:introduction} for VC, yielding TVC and SW-TVC, respectively.
The third technique aims to minimize the update time while guaranteeing an optimal or \hbox{good solution~\cite{bhattacharya_deterministic_2018, hanauer_recent_2022}.}
Bhattacharya et al.~\cite{bhattacharya_deterministic_2018} propose an algorithm for maintaining VC and the maximum matching during every time step.
Such an adaptation provides a correct problem solution for every subgraph.
If all changes over the lifetime are known in advance, solving a problem with this approach is equivalent to SW-TVC, where~$\Delta=1$. Even so in this adaptation, knowing the changes in the network structure is not necessary and one only needs to receive edge updates in order to compute a new solution from the previous one.

The classic VC is known to be NP-hard \cite{GareyJ79}. In general this remains true for the adaptations TVC and SW-TVC~\cite{akrida_temporal_2020}. However, there are better proven results on restricted inputs. While on underlying paths or cycles TVC is solvable in polynomial time~\cite{hamm_complexity_2022}, it remains NP-hard on star temporal graphs~\cite{akrida_temporal_2020}. SW-TVC also remains NP-hard for path or cycle temporal graphs if~$T\ge\Delta\ge2$~\cite{hamm_complexity_2022}.
For arbitrary temporal graphs there are two exact algorithms known for~SW-TVC; both using  dynamic programming. The first one by Akrida et al.~\cite{akrida_temporal_2020} yields a running time in~$\mathcal{O}(T \Delta(n + m) \cdot 2^{n(\Delta+1)})$. For always star temporal graphs this can be bounded to~$\mathcal{O} (T \Delta(n + m) \cdot 2^{\Delta})$.
The second one by Hamm et al.~\cite{hamm_complexity_2022} runs in~$\mathcal{O}(Tc^{\mathcal{O}|E|)})$, where~$c=min\{2^{\mathcal{O}(m)}, \Delta\}$.
This leads to a FPT algorithm for SW-TVC, which parameterized by the size of a solution. This approach also provides the possibility to only solve a partial graph input, which is used by the~$d-1$-approximation~\cite{hamm_complexity_2022} for always degree at most~$d$ temporal graphs, which is implemented in this paper.
Similarly to VC having several approximation algorithms~\cite{abu-khzam_recent_2022, haider_smart_2020, lamm_finding_2017}, literature also provides different ideas of approximating SW-TVC.
Based on the idea that SW-TVC can be reduced to Set Cover Akrida et al.~\cite{akrida_temporal_2020} present two approximation algorithms. By using Linear Programming for Set Cover, proposed by Vazirani~\cite{vazirani_approximation_2001}, this leads to a~$2k$-approximation for SW-TVC~\cite{akrida_temporal_2020}, where~$k$ is the maximum edge frequency defining the maximum appearance of the edge during an arbitrary window. Since~$k$ is maximal $\Delta$, this is also a $2\Delta$-approximation.
Instead of Linear Programming it is also possible to use a greedy approach for Set Cover from Duh and Fürer \cite{duh_approximation_1997} resulting in \hbox{a~$(\ln n + \ln \Delta + \frac{1}{2})$-approximation for  SW-TVC ~\cite{akrida_temporal_2020}.}

For always degree at most~$d$ temporal graphs, i.e. the degree at every snapshot is at most~$d$, two approximation algorithms for SW-TVC are provided.
Akrida et al.~\cite{akrida_temporal_2020} propose a~$d$-approximation with running time in~$\mathcal{O}(mT)$ where~$m$ is the number of edges in the underlying graph~$G$. The algorithm uses the idea to calculate SW-TVC on every possible single-edge temporal subgraph exactly in~$\mathcal{O}(T)$ for every edge and \hbox{take the union of the result.}
Another approximation algorithm by Hamm et al.~\cite{hamm_complexity_2022} is based on the approach to iteratively cover paths with two edges instead of single edges and chooses the middle vertex to be in the cover. With that idea an overall approximation ratio of~$(d-1)$ can be achieved. The running time of this is in~$\mathcal{O}(m^2T^2)$ where~$m$ is the number of edges \hbox{in the underlying graph~$G$.}
As far as the authors know this paper provides a first implementation and experimental evaluation of of both of these algorithms. These algorithms also build the current state of the art for SW-TVC for the restriction to always star temporal graphs.
\vspace*{-0.1cm}

\section{SW-TVC on Always Star Temporal Graphs}
\label{sec:star_approx}

Always star temporal graphs consist of a star graph in every snapshot.
Note that in always star temporal graphs, a solution to SW-TVC includes at most one vertex for each time step, namely the star center at that time step.
The current state of the art to approximate SW-TVC on these graphs, would be to use the~$d$- and the~$d-1$-approximation algorithms that were proposed for always degree at most $d$ temporal graphs \cite{akrida_temporal_2020, hamm_complexity_2022}. We present here a data structure to hold temporal graphs and new approximations using the structural knowledge of always star temporal graphs to achieve better results.
\vspace*{-1em}

\subsection{Data Structure for Temporal Graphs}
To ensure the efficiency of the proposed approximation algorithms below, we shall use a data structure that stores temporal graphs which satisfies the following two requirements:
\begin{itemize}
    \item Provides access to all edges of a particular time step in~$\mathcal{O}(1)$ time;
    \item Provides access to all underlying edges of a vertex in~$\mathcal{O}(1)$ time.
\end{itemize}
To achieve this our data structure holds an array of all the underlying edges of the form $e=(u,v)$ along with their appearances $\lambda(e)$ in the form of an array $\mathcal{E}$ of time labels indicating when $e$ is active.
To access all edge appearances at a certain time step, a vector $\mathcal{T}$ of length~$T$ stores all edge indices active at each time step. This can lead to edge indices being referred to at multiple time steps.
To access all adjacent edges of a vertex, the data structure holds an adjacency list~$\mathcal{A}$, in which the respective edge indices can be retrieved for each vertex.
Every edge index is listed twice in~$\mathcal{A}$, once at each endpoint, as the graph is undirected.
We achieve our requirements and store the temporal graphs efficiently as the space complexity is in~$\mathcal{O}(\mathcal{|E|}) + \mathcal{O}(\mathcal{T} + \mathcal{O}(\mathcal{|A|}) = \mathcal{O}((2+T)\cdot m)  + \mathcal{O}(Tm)= \mathcal{O}(n+Tm) +\mathcal{O}(n+2m)$.
\vspace*{-1em}

\newcommand{\algstartriv}{\texttt{STAR-SC}}
\subsection{Trivial Algorithm \algstartriv{}}

The trivial idea to solve \hbox{(SW-)TVC} for always star temporal graphs is to include the star center in the cover in every time step in which at least one edge is active. To detect the star center for time steps with at least two active edges, we compare arbitrary two edges to identify the common vertex. For time steps where only one edge~$e=(v,w)$  is active, either endpoint of the edge may be considered as the center of the star. In this case we use~$v$ as star center. We call this algorithm \algstartriv{}, where \texttt{SC} stands for Star Center.

\begin{theorem}
    \label{thm:star_trivial}
    The \algstartriv{} algorithm approximates~SW-TVC with window size $\Delta$ on always star temporal graphs where~$T\ge \Delta$ and~$\Delta \ge 2$ with ratio~$2\Delta -1$ in~$\mathcal{O}(T)$ time.
\end{theorem}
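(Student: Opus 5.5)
The proof splits into three parts: correctness, the running time, and the ratio $2\Delta-1$. The ratio needs a charging argument.

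\emph{Correctness.} In an always star temporal graph every active edge at time $t$ is incident to the star center $c_t$ of $G_t$. This includes the single-edge case, where either endpoint serves as center. So $(c_t,t)$ covers every edge active at $t$. Take any window $W_i$ and any $e\in E[W_i]$, and pick $t\in W_i\cap\lambda(e)$. Then $(c_t,t)\in S$ covers $e$ inside $W_i$, so the output $S$ is a feasible SW-TVC for every $\Delta$.

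\emph{Running time.} With the data structure above, the edges active at each time step are available in $\mathcal{O}(1)$. Comparing two of them, or taking an endpoint of the single one, finds the center in constant time per step. This gives $\mathcal{O}(T)$ in total.

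\emph{Ratio.} Let $A=\{t:E(G_t)\neq\emptyset\}$, so $|S|=|A|$, and let $S^\star$ be an optimal solution. I will charge every $t\in A$ to a vertex appearance of $S^\star$ so that each $(v,t')\in S^\star$ is charged at most $2\Delta-1$ times. Fix $t\in A$ and an edge $e$ active at $t$. Consider the window $W_{t'}$ that contains $t$. Feasibility of $S^\star$ gives some $(w,t'')\in S^\star$ with $t''\in W_{t'}$, which means $|t-t''|\le\Delta-1$. Choosing the window to start at $t$ or to end at $t$ is enough, and at the boundary of the lifetime one of these windows exists because $T\ge\Delta$. Charge $t$ to that element of $S^\star$. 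Each $(w,t'')$ is then charged only by time steps in $[t''-\Delta+1,\,t''+\Delta-1]$, and there are at most $2\Delta-1$ of them. Hence $|S|\le(2\Delta-1)|S^\star|$.

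The main obstacle is making sure such a window always exists. Windows are indexed from $1$ to $T-\Delta$ (or $T-\Delta+1$), so I need to check that every time step $t$ lies in some window. This holds because $T\ge\Delta$, so the window starting at $\max(1,\min(t,T-\Delta+1))$ contains $t$. Any element of $S^\star$ in that window is within distance $\Delta-1$ of $t$, and that is all the charging argument needs.
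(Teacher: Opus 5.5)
Your proof is correct, but it argues the ratio differently from the paper. The paper handles arbitrary instances only through the informal claim that the static star (every edge active at every step) is the worst case. It then computes $x=T$ and $x^\star=\lfloor T/\Delta\rfloor$ for that instance and maximizes $T/\lfloor T/\Delta\rfloor$ over $T\ge\Delta$ by a residue-class computation, reaching $2\Delta-1$ at $T=2\Delta-1$.

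Your charging argument replaces the unproved ``static is worst'' step. You assign each active time step to an element of $S^\star$ lying in a common window with it. This gives $|S|\le(2\Delta-1)|S^\star|$ on every instance, with no floor arithmetic and no use of $\Delta\ge 2$. It also shows that the ratio bound uses only that the algorithm takes at most one appearance per active step. The star structure is needed only for feasibility, which you verify and the paper leaves implicit.

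What the paper's route gives that yours lacks is tightness. The static star with $T=2\Delta-1$ has $x^\star=1$ (the center at time $\Delta$) and $x=2\Delta-1$. Adding one sentence citing this example would show the ratio of \texttt{STAR-SC} is exactly $2\Delta-1$.

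Two bookkeeping remarks on windows:
\begin{itemize}
\item The sentence ``choosing the window to start at $t$ or to end at $t$ is enough'' is false when $T<2\Delta-2$. For $T=\Delta=5$ and $t=3$, neither $[3,7]$ nor $[-1,3]$ is a window. Your later choice of the window starting at $\max(1,\min(t,T-\Delta+1))$ is the correct one, so drop the earlier sentence.
\item Your argument needs the last window to be $W_{T-\Delta+1}=[T-\Delta+1,T]$, as in the main loop of \texttt{STAR-ACOV}, not $W_{T-\Delta}$ as the preliminaries literally state. Under the latter, step $T$ lies in no window and the theorem fails: an edge active only at time $T$ costs \texttt{STAR-SC} one appearance while the optimum is empty. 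State the $T-\Delta+1$ convention explicitly rather than hedging between the two.
\end{itemize}
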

\vspace*{-1em}

\begin{proof}
    To prove Theorem~\ref{thm:star_trivial}, we need to prove the running time and approximation ratio.
    The running time of the algorithm is in~$\mathcal{O} (T)$, since we check all time steps and the detection of the star center is in~$\mathcal{O}(1)$, as we compare at most two edges.
    For the approximation ratio, we need to consider the worst case and compare the approximate solution computed by \algstartriv{} to the optimal solution.
    The worst case scenario holds when all edges are active in all time steps, in that case our algorithm includes the (static) star center in every time step, while only one coverage per window is required for an optimal solution.
    Our algorithm cannot perform worse than on a static instance: there is no case in which our algorithm calculates a larger solution or where a smaller optimal solution exists. Since by definition, our algorithm adds at most one node appearance per time step, the most (possibly unnecessary) star centers are added to a solution if at least one edge occurs in each time step. The maximum overlap of edge appearances in windows and thus the smallest optimal solution size is achieved in the static case.    
    Formally, let the size of the exact (i.e. smallest possible)~SW-TVC solution be~$x^\star$ and the size of our approximate solution be~$x$. Then~$x^\star = \lfloor \frac{T}{\Delta} \rfloor$ and~$x = |\mathcal{X}| = T$. We need to show that the approximation ratio~$\frac{x}{x^\star}$ is bounded:$
    \frac{T}{\lfloor \frac{T}{\Delta} \rfloor} \le 2\Delta -1$.
    To break this down, we consider a representation of the lifetime in terms of the window size:~$T = c\cdot\Delta +d$, where~$c,d \in \mathbb{N}$,~$c\geq1$ and~$0\leq d \leq \Delta-1$. Then, we can derive the modulo classes~$R_d \in \{R_0, \dots, R_{\Delta-1}\}$ for the denominator of the equation. Since the most round-off is achieved in the~$R_{\Delta-1}$ class, a value in this class maximizes the ratio~$\frac{T}{\lfloor\frac{T}{\Delta} \rfloor}$. In this class $T$ can be represented as~$T = c\cdot\Delta + (\Delta-1) = a\cdot\Delta -1$ with~$a=c+1, a>1$ and hence~$\lfloor \frac{a\cdot\Delta -1}{\Delta} \rfloor = a-1$.
    To derive the maximum value of $\frac{T}{\lfloor \frac{T}{\Delta} \rfloor}$, we consider any value~$T = a\cdot\Delta -1$ and show that the next larger element of the class~$R_{\Delta-1}$, represented as~$T = (a+1)\cdot\Delta -1$, does not lead \hbox{to a larger approximation ratio.}
    \begin{equation}
        \nonumber
         \begin{aligned}
        \frac{a\cdot\Delta -1}{a-1} \ge \frac{(a+1)\cdot\Delta -1}{a} &\rightleftharpoons
        a\cdot\Delta -1 \ge \frac{(a^2-1)\cdot\Delta -a+1}{a} \rightleftharpoons\\
        a\cdot\Delta -1 \ge  a\cdot\Delta -1 - \frac{\Delta-1}{a}  &\rightleftharpoons
        0 \ge 1-\Delta,
         \end{aligned}
    \end{equation}
    which clearly holds, since~$\Delta\geq2$. Therefore, the~$T$ with the smallest valid value of~$a$, which is~$a=2$, leads to the maximum value of~$\frac{T}{\lfloor \frac{T}{\Delta} \rfloor}$:
    \begin{equation}
        \nonumber
        \frac{2\Delta-1}{\lfloor \frac{2\Delta-1}{\Delta} \rfloor} =\frac{2\Delta-1}{1}=2\Delta-1 \text{ .}
    \end{equation}
    Hence, \algstartriv{} achieves an approximation ratio of~$2\Delta-1$.
\end{proof}

For the case~$\Delta=1$, the algorithm provides the exact solution. This makes sense, since in~$1$-TVC every snapshot is considered separately and the exact solution consists of \emph{every} star center, similar to the solution returned by \algstartriv{}.
\vspace*{-1em}

\newcommand{\algstaradv}{\texttt{STAR-ACOV}}
\subsection{Algorithm \algstaradv{}}
The second algorithm for always star temporal graphs \algstaradv{}, where \texttt{ACOV} stands for All COVered (Algorithm~\ref{algo:star_advanced}), is based on the idea that the star center of a single time step $t$ only needs to be included in the cover if there is any active edge at $t$, which can not be covered by a star center of another time step in all windows containing $t$. This is achieved by checking every sliding window individually. If all edges in a certain time step~$t$ can be covered by other appearances in the window, the star center at time $t$ does need not be added to the cover for that window. However, it still may be needed for a later window.
We use a monitoring table~$\mathcal{C}$ of size $\Delta\times m$ to store one window with all its time steps and for all edges, whether they are active at each of these time steps. To have minimal update costs, we use this table as a ring buffer, in every iteration one time step gets overwritten to switch to the next window and we keep a pointer $first$ to the first time step in current window.
An additional vector~$\mathcal{I}$ stores for each time step in the current window, whether the center vertex is already included (value 2), available (value 1) or excluded (value 0) from the cover.
The goal is to iteratively go through all windows and check if any star center can be excluded from the cover. Therefore, we firstly switch to the current window by updating~$\mathcal{C}$ and~$\mathcal{I}$ based on $first$ (line 4/5) and then iterate over the time steps in the window. If a star center at a time step is not already included in the cover and any of its edges can not be covered by other appearances, we add the star center to our solution and mark it as included. Otherwise, we exclude it from the cover in that window. In that case we need to include the star centers of other available time steps $\mathcal{J}$ which we need for a proper solution. Each  time step~$j\in\mathcal{J}$ is chosen optimal (line 16) by either choosing the latest possible one to cover the respective edge appearance or an earlier one, which is already included in the cover.

\begin{algorithm}[!bt]
    \setcounter{AlgoLine}{0}
    \LinesNumbered
    \KwIn{A temporal graph $(G, \lambda)$ with lifetime $T$, where $G = (V , E)$, and a natural $\Delta \leq T$}
    \KwOut{A temporal vertex cover $\mathcal{X}$ of $(G, \lambda)$ }
    $\mathcal{X} := \emptyset $;
    $\mathcal{C}[\Delta][m] := \langle \langle0,\dots,0\rangle, \dots, \langle0,\dots,0\rangle\rangle $;
    $\mathcal{I}[\Delta] = \langle 1,\dots,1 \rangle$;
    $first=\Delta-1$

    Init $\mathcal{C}$ for timesteps $[1,\Delta-1]$
    
    \ForEach{$t$ in $[1, T-\Delta+1]$}{
        Update $\mathcal{C}$ and $\mathcal{I}$ for timestep $t+\Delta-1$ at $first$

        Update $first$

        \ForEach{$i$ in $[0, \Delta-1]$}{

            $idx_i= (first+i)\%\Delta$

            \If{$\mathcal{I}[idx_i]$ is already included in cover}{
                continue
            }

            \eIf{Any $m \in \mathcal{C}[idx_i]$ is not covered by another (not excluded) star center in $W_t$}{
                $ti = t+i$

                $\mathcal{X} = \mathcal{X} \cap \{(center_{ti},ti)\}$

                $\mathcal{I}[idx_i] = 2$

            }{
                $\mathcal{I}[idx_i] = 0$

                \ForEach{optimal $j$ needed to cover an edge $m_i$ in $i$}{
                    $tj = t+j$

                    $idx_j= (first+j)\%\Delta$

                    $\mathcal{X} = \mathcal{X} \cap \{(center_{tj},tj)\}$

                    $\mathcal{I}[idx_j] = 2$
                }
            }
        }
    }

    \Return{$\mathcal{X}$}
    \caption{\algstaradv{}}
    \label{algo:star_advanced}
\end{algorithm}

\begin{theorem}
    \label{thm:star_advanced}
    The \algstaradv{} algorithm approximates~SW-TVC with window size $\Delta$ on always star temporal graphs where~$T\ge \Delta$ and~$\Delta \ge 2$ with ratio~$\Delta -1$ in~$\mathcal{O}(Tm\Delta^2)$ time.
\end{theorem}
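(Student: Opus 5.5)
The proof splits into a running-time bound and an approximation-ratio bound, with the ratio argument following the worst-case style of the proof of Theorem~\ref{thm:star_trivial}.

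\textbf{Running time.} The outer loop runs over the $T-\Delta+1$ windows $W_t$. Updating the ring buffer $\mathcal{C}$ and the vector $\mathcal{I}$ for the new time step $t+\Delta-1$ rewrites one row of $\mathcal{C}$, which costs $\mathcal{O}(m)$. The data structure gives $\mathcal{O}(1)$ access to the edges of that time step. The inner loop visits the $\Delta$ time steps of the window. For each step $i$, deciding whether some active edge in row $idx_i$ is uncovered by another non-excluded center of $W_t$ means scanning at most $m$ edge entries. For each entry we look across the $\Delta$ rows of the window, which costs $\mathcal{O}(m\Delta)$. The subsequent selection of the ``optimal'' $j$'s (the latest available appearance, or an earlier already-included one) reuses the same scan. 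Per window this gives $\mathcal{O}(\Delta\cdot m\Delta)$, and $\mathcal{O}(Tm\Delta^2)$ overall.

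\textbf{Correctness.} Before bounding the ratio, I would show that the output is a valid SW-TVC. Fix a window $W_t$ and an edge appearance $(e,t')$ with $t'\in W_t$. When step $t'$ is processed in $W_t$, there are three cases.
\begin{itemize}
\item Its center is already included, and then $e$ is covered.
\item The center is added because some edge is uncovered, and then $e$ is covered.
\item The center is excluded, and then line 16 includes, for every edge of step $t'$ and in particular $e$, a center $j\in W_t$ at which $e$ is active.
\end{itemize}
In every case $e$ is covered inside $W_t$.

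\textbf{Approximation ratio.} I would mirror Theorem~\ref{thm:star_trivial} and compare the output size $x$ against a lower bound on $x^\star$. The key structural claim is that \algstaradv{} never includes more than $\Delta-1$ centers per disjoint block of $\Delta$ time steps in which $\mathrm{OPT}$ places a single center. The argument runs as follows.
\begin{itemize}
\item Within any window, if all $\Delta$ centers were included, the first center processed in that window that was not previously included would have been excludable. Such a center exists unless all were included earlier.
\item The reason it is excludable is that in an always star graph, an edge appearing at two time steps of the window is covered by either occurrence's center. So when the window contains only edges that repeat, some step is redundant.
\item If instead some edge appears only once in a window, then $\mathrm{OPT}$ must itself pay for that specific center. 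This raises $x^\star$ locally and lets the charge be absorbed.
\end{itemize}
The plan is then a charging scheme. Partition the lifetime into consecutive blocks and charge each center of $\mathcal{X}$ to an element of $\mathrm{OPT}$ lying in a window that contains it. Show that each element of $\mathrm{OPT}$ receives at most $\Delta-1$ charges. As in the trivial bound, the worst instance should be near-static, where $\mathrm{OPT}$ picks one center per window and the algorithm keeps all but one.

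\textbf{Main obstacle.} The hard step is the per-$\mathrm{OPT}$-element charging bound. Greedy choices made in earlier windows (the ``already included'' centers, and the latest-possible choice at line 16) can leave overlapping windows with more included centers than a fresh analysis would predict. Boundary effects at the first and last windows can also create extra charges, of the kind that produced the $2\Delta-1$ floor effect before. My first attempt would be an exchange/induction argument over $t$, with the invariant that in every window $W_t$ after processing, at least one time step carrying an active edge is excluded, unless $\mathrm{OPT}$ restricted to $W_t$ has size at least $2$. If that invariant fails at the boundaries, I would try handling the first window separately. If it fails only on specific edge patterns, I would try restricting the claim to windows in which some edge appears at least twice.
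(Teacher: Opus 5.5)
\textbf{The gap is the approximation ratio, which is the real content of the theorem.} Your running-time count and your feasibility argument are fine; the feasibility check is more than the paper writes down. But the bound ``each optimal center receives at most $\Delta-1$ charges'' is never proved, only named as the main obstacle, and the local claims you want to build it on are false.

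Take $\Delta=3$, $T=4$, all edges incident to one vertex $v$, with $E_1=\{a,b,e\}$, $E_2=\{a,b\}$, $E_3=\{a,e\}$, $E_4=\{a,b,c,e\}$. The unique optimum is $\{(v,1),(v,4)\}$. In $W_1$ the algorithm excludes step $1$ and must then add steps $2$ and $3$, the only other carriers of $b$ and of $e$. In $W_2$ the first center not yet included, step $4$, is not excludable, because $c$ occurs only there. So $W_2=[2,4]$ ends fully included while the optimum has a single center in it. This refutes your first bullet and your proposed invariant, and it shows that charges cannot be kept inside a window.

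\textbf{Worse, no charging scheme can give $\Delta-1$ for the algorithm as specified.} Read the choice of $j$ the way the text describes it: reuse an already included center carrying the edge, otherwise take the latest appearance. Note also that the $0$-marks in $\mathcal{I}$ persist, since only the slot of the new time step is updated.

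Now take $\Delta=3$, $T=7$, all edges at $v$, with $E_1=\{x\}$, $E_2=\{x,y,z\}$, $E_3=\{x,y\}$, $E_4=\{x\}$, $E_5=\{x,y\}$, $E_6=\{y\}$, $E_7=\{x\}$. The cover $\{(v,2),(v,5)\}$ is optimal: it has one center in every window, and $W_1$ and $W_4$ are disjoint and nonempty, so nothing smaller works. The algorithm runs as follows.
In $W_1$ it excludes $1$ and adds $3$ for $x$, then adds $2$ for $z$.
In $W_2$ and $W_3$, steps $4$ and $5$ get the mark $0$, since both are covered by $3$.
In $W_4$, step $5$ still carries the mark $0$ and $E_6=\{y\}$, so $x$ at step $4$ counts as uncovered and $4$ is added; then $5$ is excluded again and $6$ is added for $y$.
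In $W_5$, step $5$ is excluded once more and $7$ is added for $x$.
The output $\{2,3,4,6,7\}$ has size $5>(\Delta-1)\cdot 2$.

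So the step you left open cannot be filled. The paper's own argument does not close it either. Its assertion that at least one center per window is excluded whenever the optimum uses one center per window fails on both instances above. The family of Figure~\ref{pic:own_adv_worst} only shows that the ratio is at least $\Delta-1$, not at most. The persistent exclusion marks and the per-edge choice of $j$ are what break the bound. A provable statement needs either a modified \algstaradv{} or a weaker ratio.
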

\begin{proof}
    The running time of \algstaradv{} is in~$\mathcal{O}(Tm\Delta^2)$, since the loops in lines 3 and 6 take time~$\mathcal{O}(T\Delta)$. Checking if any edge is not covered by another star center (line 10) takes at most~$\mathcal{O}(m\Delta)$, since we need to look at all edges in every time step in the window. The loop in line 16 takes time at most~$\mathcal{O}(m)$, since we store the cover candidates of every edge separately, to choose the optimal candidate.
    Hence, the overall running time is in~$\mathcal{O}(Tm\Delta^2)$.

    The approximation ratio of the algorithm is a result of the fact that the algorithm excludes the first possible star center appearance that can be covered though others, even if several others could be excluded later were we to keep that star center appearance in the cover.
    
    In order to analyze the approximation ratio, let us first see when the worst case instance occurs for $\Delta>1$. First, suppose that the optimal solution contained less than one star center per window; this would imply that there is no edge appearance in at least one window, and our algorithm would not include these star centers in the solution either. Therefore, we can assume that in the worst case instance, every window has at least one edge appearance and the optimal solution includes at least one star center per window. 
    When the optimal solution includes only one star center per window, that means there exists a star center which covers all appearing edges in that window, hence our algorithm will by definition exclude at least one star center of every window. However, it is possible that all but one star center in every window will be included in our solution (worst case).
    In the case where the optimal solution contains $1\le i < \Delta$ star centers per window, our algorithm will by definition still exclude at least one star center per window. This is not the worst case, since our solution could still include at most all but one star center in every window, but the optimal solution is larger in this case than in the worst case ($i\ge2$ star centers per window).
    If there are $\Delta$ star centers per window in the optimal solution, our algorithm will also include all of them.

    Therefore, the worst case scenario in terms of approximation ratio arises on temporal topologies such as the one shown in Figure~\ref{pic:own_adv_worst}.
    In a general instance of the considered topology the lifetime $T$ is a multiple of $\Delta$ and snapshots repeat every~$\Delta$ time steps, i.e., if~$G_t$ is the snapshot at time step $t$, then $G_t=G_{t+\Delta}$; moreover, the snapshots~$G_2, \dots, G_{\Delta}$ are distinct in their edges and \hbox{$E(G_1) = \bigcup_{i\in[2,\Delta]} E(G_i)$}.
    
    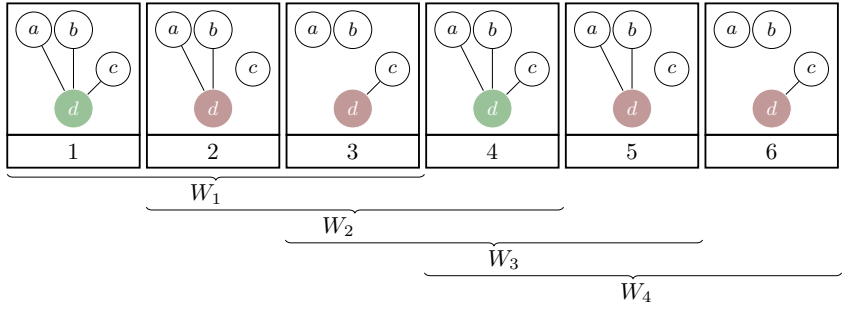
\begin{figure}[!bt]
        \center
        \resizebox{0.8\textwidth}{!}{         
        \begin{tikzpicture}[main/.style = {draw, circle}]
            \begin{scope}[xshift=0]
                \node[draw,thick,minimum width=2cm,minimum height=2cm] (box) at (0,0)  {};
                \node[draw,thick,minimum width=2cm,minimum height=0.5cm] (box) at (0,-1.25)  {1};
                \node[main, minimum size=0.5cm] (a) at (-0.6,0.6){\footnotesize$a$};
                \node[main, minimum size=0.5cm] (b) at (0,0.6)  {\footnotesize$b$};
                \node[main, minimum size=0.5cm] (c) at (0.6,0)  {\footnotesize$c$};
                \node[main, minimum size=0.5cm, color=white, fill=green!40!black!40] (d) at (0,-0.6) {\footnotesize$d$};
        
                \path (a) edge (d);
                \path (b) edge (d);
                \path (c) edge (d);
            \end{scope}
            \begin{scope}[xshift=60]
                \node[draw,thick,minimum width=2cm,minimum height=2cm] (box) at (0,0)  {};
                \node[draw,thick,minimum width=2cm,minimum height=0.5cm] (box) at (0,-1.25)  {2};
                \node[main, minimum size=0.5cm] (a) at (-0.6,0.6){\footnotesize$a$};
                \node[main, minimum size=0.5cm] (b) at (0,0.6)  {\footnotesize$b$};
                \node[main, minimum size=0.5cm] (c) at (0.6,0)  {\footnotesize$c$};
                \node[main, minimum size=0.5cm, color=white, fill=red!40!black!40] (d) at (0,-0.6) {\footnotesize$d$};
        
                \path (a) edge (d);
                \path (b) edge (d);
            \end{scope}
            \begin{scope}[xshift=120]
                \node[draw,thick,minimum width=2cm,minimum height=2cm] (box) at (0,0)  {};
                \node[draw,thick,minimum width=2cm,minimum height=0.5cm] (box) at (0,-1.25)  {3};
                \node[main, minimum size=0.5cm] (a) at (-0.6,0.6){\footnotesize$a$};
                \node[main, minimum size=0.5cm] (b) at (0,0.6)  {\footnotesize$b$};
                \node[main, minimum size=0.5cm] (c) at (0.6,0)  {\footnotesize$c$};
                \node[main, minimum size=0.5cm, color=white, fill=red!40!black!40] (d) at (0,-0.6) {\footnotesize$d$};
        
                \path (c) edge (d);
            \end{scope}
        
            \begin{scope}[xshift=180]
                \node[draw,thick,minimum width=2cm,minimum height=2cm] (box) at (0,0)  {};
                \node[draw,thick,minimum width=2cm,minimum height=0.5cm] (box) at (0,-1.25)  {4};
                \node[main, minimum size=0.5cm] (a) at (-0.6,0.6){\footnotesize$a$};
                \node[main, minimum size=0.5cm] (b) at (0,0.6)  {\footnotesize$b$};
                \node[main, minimum size=0.5cm] (c) at (0.6,0)  {\footnotesize$c$};
                \node[main, minimum size=0.5cm, color=white, fill=green!40!black!40] (d) at (0,-0.6) {\footnotesize$d$};
        
                \path (a) edge (d);
                \path (b) edge (d);
                \path (c) edge (d);
            \end{scope}
            \begin{scope}[xshift=240]
                \node[draw,thick,minimum width=2cm,minimum height=2cm] (box) at (0,0)  {};
                \node[draw,thick,minimum width=2cm,minimum height=0.5cm] (box) at (0,-1.25)  {5};
                \node[main, minimum size=0.5cm] (a) at (-0.6,0.6){\footnotesize$a$};
                \node[main, minimum size=0.5cm] (b) at (0,0.6)  {\footnotesize$b$};
                \node[main, minimum size=0.5cm] (c) at (0.6,0)  {\footnotesize$c$};
                \node[main, minimum size=0.5cm, color=white, fill=red!40!black!40] (d) at (0,-0.6) {\footnotesize$d$};
        
                \path (a) edge (d);
                \path (b) edge (d);
            \end{scope}
            \begin{scope}[xshift=300]
                \node[draw,thick,minimum width=2cm,minimum height=2cm] (box) at (0,0)  {};
                \node[draw,thick,minimum width=2cm,minimum height=0.5cm] (box) at (0,-1.25)  {6};
                \node[main, minimum size=0.5cm] (a) at (-0.6,0.6){\footnotesize$a$};
                \node[main, minimum size=0.5cm] (b) at (0,0.6)  {\footnotesize$b$};
                \node[main, minimum size=0.5cm] (c) at (0.6,0)  {\footnotesize$c$};
                \node[main, minimum size=0.5cm, color=white, fill=red!40!black!40] (d) at (0,-0.6) {\footnotesize$d$};
        
                \path (c) edge (d);
            \end{scope}
        
            \draw [decorate,decoration = {brace,mirror}] (-1,-1.6) --  (5.3,-1.6);
            \node[] (nodesLabel) at (2,-1.9)  {$W_1$};
            \draw [decorate,decoration = {brace,mirror}] (1.1,-2.1) --  (7.4,-2.1);
            \node[] (nodesLabel) at (4,-2.4)  {$W_2$};
            \draw [decorate,decoration = {brace,mirror}] (3.2,-2.6) --  (9.5,-2.6);
            \node[] (nodesLabel) at (6.5,-2.9)  {$W_3$};
            \draw [decorate,decoration = {brace,mirror}] (5.3,-3.1) --  (11.6,-3.1);
            \node[] (nodesLabel) at (8.5,-3.4)  {$W_4$};
        
        \end{tikzpicture}
        }
        \caption{A worst case instance of 3-TVC for \algstaradv{} where the snapshots repeat every three time steps and time steps 1 and 3 contain all edges appearing in the following ones $\{2,3\}$ and $\{5,6\}$, respectively, while the edge in these time steps are distinct. The green highlighted temporal vertices show the optimal solution and the red ones are in the solution computed by \algstaradv{}.}
        \label{pic:own_adv_worst}
        \vspace*{-0.3cm}
    \end{figure}
    In Figure~\ref{pic:own_adv_worst}, the green vertex appearances show the exact solution, while the red ones are the approximate solution computed by \algstaradv{}. The exact solution contains the star center appearance of every~$G_i$, where~$i\%\Delta =0$. Hence, its size is~$\left\lceil \frac{T}{\Delta}\right\rceil$. The solution of \algstaradv{} on the other hand would exclude the star center appearances of these~$G_i$'s if~$\Delta\ge2$, since they appear first and all their edges can be covered through the other star center appearances in each window, but would include the star center of all other time steps, since the subgraphs in the time steps~$\{j | i<j<i+\Delta \text{ } \ \forall\ i\%\Delta =0 \}$ are distinct in every window. Hence, on instances with the considered topology our algorithm computes a solution of size~$T-\left\lceil \frac{T}{\Delta}\right\rceil$ when~$\Delta\ge2$, what is stated in Theorem~\ref{thm:star_advanced}. By decomposing the lifetime over~$\Delta$, we get~$T = a \cdot \Delta +b$, where~$a, b \in \mathbb{N}^+_0$ and~$b<\Delta$. To get to the ratio, we distinguish two cases,~$b=0$ and~$b>0$. In the first case, we consider~$T = a\cdot\Delta$ ($b=0$). The size of the exact solution is $\left\lceil \frac{T}{\Delta}\right\rceil = \left\lceil \frac{a\cdot\Delta}{\Delta}\right\rceil = a$.
    Therefore the approximation ratio is
    \begin{equation}
        \nonumber
        \frac{T-\left\lceil \frac{T}{\Delta}\right\rceil}{\left\lceil \frac{T}{\Delta}\right\rceil} = \frac{a\cdot\Delta-a}{a}= \Delta-1 \text{ .}
    \end{equation}
    For~$b>0$ we have~$T=a\cdot\Delta+b$ and the size of the exact solution is $\left\lceil \frac{T}{\Delta}\right\rceil = \left\lceil \frac{a\cdot\Delta+b}{\Delta}\right\rceil = a+1$.
    In this second case the ratio can be calculated as
    \begin{equation}
        \nonumber
        \begin{aligned}
            \frac{T-\left\lceil \frac{T}{\Delta}\right\rceil}{\left\lceil \frac{T}{\Delta}\right\rceil} & = \frac{a\cdot\Delta+1-(a+1)}{a+1}                     
            & = \left(1-\frac{1}{a+1}\right)\Delta + \frac{b}{a+1} -1
            & = \Delta-1 -\frac{\Delta-b}{a+1} \text{ .}
        \end{aligned}
    \end{equation}
    Since ~$b<\Delta$ holds, the last subtrahend is always positive. This results in the ratio is always smaller in the second case and the maximal ratio being reached in the first case. Therefore, the approximation ratio of \algstaradv{} is~$\Delta -1$.
\end{proof}

Similar to \algstartriv{}, \algstaradv{} computes the exact solution for~$1$-TVC, since the exact solution contains every appearing star center. Further, the algorithm is also exact for~$2$-TVC, since Theorem~\ref{thm:star_advanced} proves a ratio of~$1$ for~$\Delta=2$.

\section{Experimental Evaluation}
\label{sec:exp}

\newcommand{\algdapprox}{\texttt{D-APPROX}}
\newcommand{\algdsapprox}{\texttt{D-APPROX-S}}
\newcommand{\algdoneapprox}{\texttt{D-1-APPROX}}

We compare the proposed \algstartriv{} and \algstaradv{} algorithms to the known $d$-~\cite{akrida_temporal_2020} and \hbox{$d-1$-~\cite{hamm_complexity_2022}} approximation algorithms, to which we refer as \algdapprox{} and \algdoneapprox{}, respectively. On small instances we also compute an exact solution computed by the algorithm in \cite{hamm_complexity_2022}. All algorithms were implemented in C++20\footnote{\tvcsolverrepo}.
These algorithms have different approximation ratios on always star temporal graphs. To take this into account, we perform experiments under three conditions:~$\Delta<d$, $\Delta>d$ and on larger instances.
Further, for the known algorithms there exist real-life instances that are appropriate to experiment with, which we use to test their general performance.

For the experiments we use SLURM jobs with 8 cores and 100GiB of RAM on an Ubuntu 20.04.5 LTS machine with linux kernel version 5.4.0-135, 112-core Intel(R) Xeon(R) Gold 6238R CPU running at 2.20GHz, and 512GiB main memory. We repeat the experiments three times and build the geometric mean.
In some graphic representations we used one algorithm as a baseline and compare the others in relation to it; the respective axes are labeled as 'relative'.
We describe differences as \textit{improvement}  which is calculated as~$\left(\frac{\sigma_B}{\sigma_A}-1\right)*100\%$~\cite{faraj_buffered_2022}, where Algorithm A is compared with Algorithm B and~$\sigma_S$ is some objective.

To the best of our knowledge, there are no real-life temporal graph databases specifically for the class of always star temporal graphs; therefore, we use artificially generated ones. We developed a generator\footnote{\graphgenrepo} allowing us to bound the maximum degree any star snapshot, to enable a comparison between the always star and the always degree at most~$d$ algorithms.
To push the proposed algorithms to their limits we want to include closer instances to the worst cases. Therefore, in all experiments half of the tested instances are (underlying) star temporal graphs, a subclass of the always star temporal graphs.
For reproduciblity, we state the generator configurations in Table~\ref{tab:exp:graphs:stars:config}. In the experiments with larger instances, we decided to prioritize a long lifetime and fewer nodes, to see the results in that case.

\begin{table}[!bt]
    \begin{center}
        \caption{Used configurations for the artificial always star graph generation}
        \label{tab:exp:graphs:stars:config}
        \resizebox{0.9\textwidth}{!}{
            \begin{tabular}{|l|l|l|l|l|}
                \hline
                Experiment                                   & $|V|$              & $T$              & $d$                             & random seed          \\
                \hline
                $\Delta<d$   & \numprint{128} & \numprint{64}    & \{\numprint{10}, \numprint{15}, \numprint{20}, \numprint{25}, \numprint{30} \}   &  \{\numprint{0}, \numprint{3}\}\\
                $\Delta>d$   &         \numprint{128} & \numprint{64}  &  \{\numprint{3}, \numprint{4}, \numprint{5}, \numprint{6}, \numprint{7}, \numprint{8}\}  &\{\numprint{0}, \numprint{3}, \numprint{5}\}    \\
                Larger instances            &  \{\numprint{1024}, \numprint{4096}\} & \{\numprint{2354}, \numprint{4692}, \numprint{7030}, \numprint{9369}, \numprint{11707}, \numprint{14045}\}   & \{\numprint{5}, \numprint{50}, \numprint{100}\}  &\{\numprint{0}, \numprint{3}, \numprint{5}\} \\
                \hline
            \end{tabular}
           }
    \end{center}
    \vspace*{-0.5cm}
\end{table}

There exist, however, real-life data that are appropriate for testing the known \algdapprox{} and \algdoneapprox{} algorithms against each other; we do so in Section~\ref{sec:reallifedata}. 
For real-life data we use graphs from the SNAP library~\cite{leskovec_graph_2007}, see Table \ref{tab:exp:graphs:reallife}.
We preprocess the data to the format used by our framework.
In terms of ease, we consider hourly contacts.
We remove the direction of the edges, self-loops and any possible other provided information.

\begin{table}[!bt]
    \begin{center}
        \caption{Real-life temporal graph dataset from the SNAP library \cite{leskovec_graph_2007}}
        \label{tab:exp:graphs:reallife}
        \resizebox{0.9\textwidth}{!}{
            \begin{tabular}{|l|l|l|l|l|}
                \hline
                Graph                      & $T$              & $|V|$              & $|E|$              & Description                                       \\
                \hline
                email-Eu-core-temporal     & \numprint{19295} & \numprint{1005}    & \numprint{16064}   & E-mails between users at a research institution   \\
                sx-askubuntu               & \numprint{62732} & \numprint{515280}  & \numprint{455691}  & Comments, questions, and answers on Ask Ubuntu    \\
                sx-mathoverflow            & \numprint{56408} & \numprint{88580}   & \numprint{187986}  & Comments, questions, and answers on Math Overflow \\
                sx-superuser               & \numprint{66560} & \numprint{567315}  & \numprint{714570}  & Comments, questions, and answers on Super User    \\
                wiki-talk-temporal         & \numprint{55690} & \numprint{1140149} & \numprint{2787967} & Users editing talk pages on Wikipedia             \\
                CollegeMsg                 & \numprint{4649}  & \numprint{1899}    & \numprint{13838}   & Messages on a Facebook-like platform at UC-Irvine \\
                soc-redditHyperlinks-body  & \numprint{29184} & \numprint{27862}   & \numprint{137808}  & Hyperlinks between subreddits on Reddit           \\
                soc-redditHyperlinks-title & \numprint{29184} & \numprint{43694}   & \numprint{234777}  & Hyperlinks between subreddits on Reddit           \\
                \hline
            \end{tabular}
            }
    \end{center}
\end{table}

\vspace*{-1em}

\subsection{Experiments under the Condition~$\Delta<d$}
When~$\Delta<d$ the approximation ratios for our proposed algorithms are better than the ratios of the state-of-the-art algorithms. This is also reflected in our experiment, where we computed approximate solution to ~$3$- and~$4$-TVC, see Figure~\ref{fig:exp_star:all}.
This experiment shows that our algorithms provide far better, close to optimal, solutions than \algdapprox{} in this scenario. As expected \algdoneapprox{} performs much better than \algdapprox{}, as the former searches for uncovered triangles, often leading to the detection of the star center. Still, our \algstaradv{} outperforms \algdoneapprox{}.
\begin{figure}[!bt]
    \centering
    \subfloat[\footnotesize{Solution size comparison}\label{fig:exp_star:ratio}]{\includegraphics[width=0.49\textwidth]{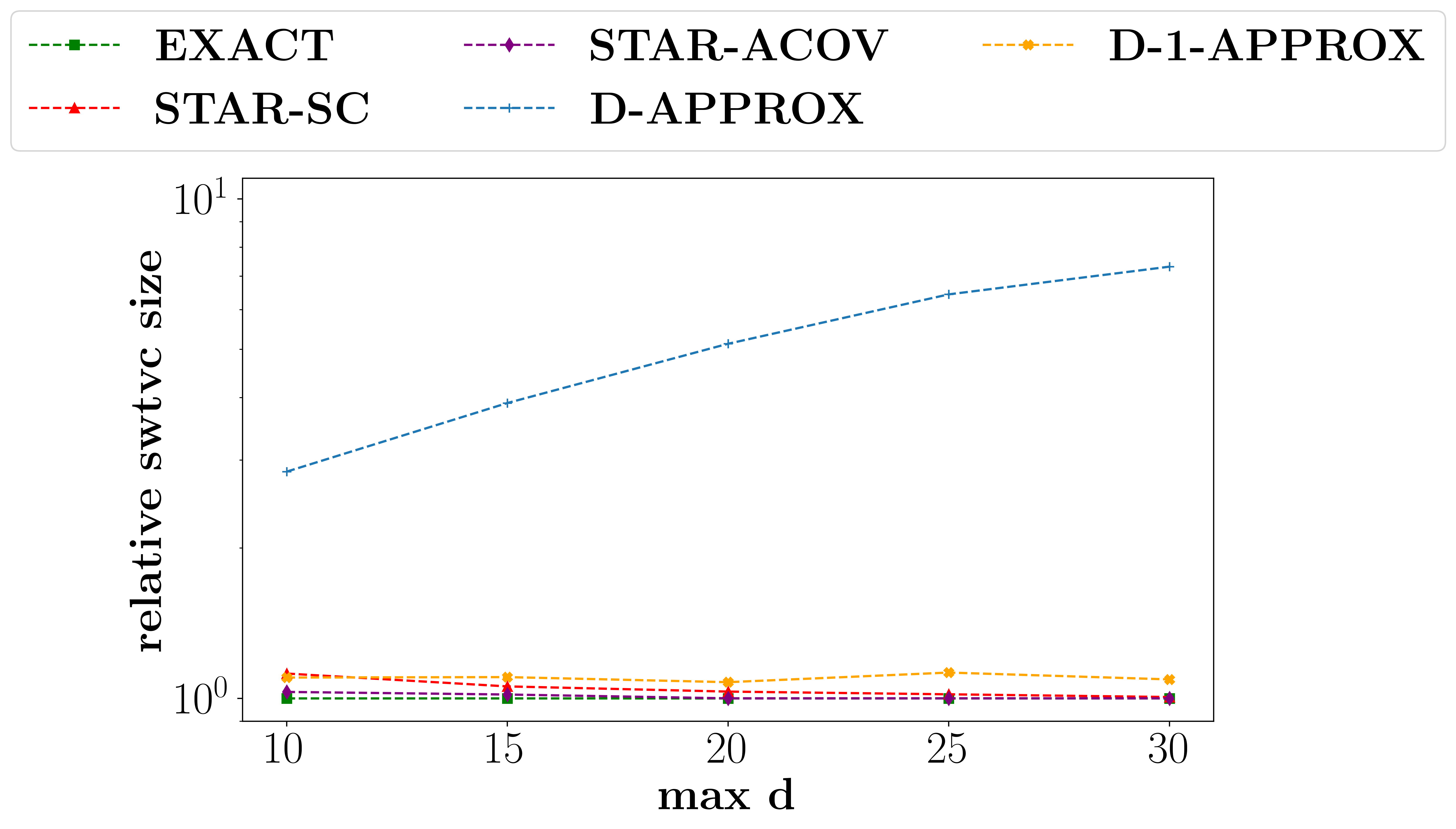}}
    \hfill
    \subfloat[\footnotesize{Running time comparison}\label{fig:exp_star:runtime}]{\includegraphics[width=0.49\textwidth]{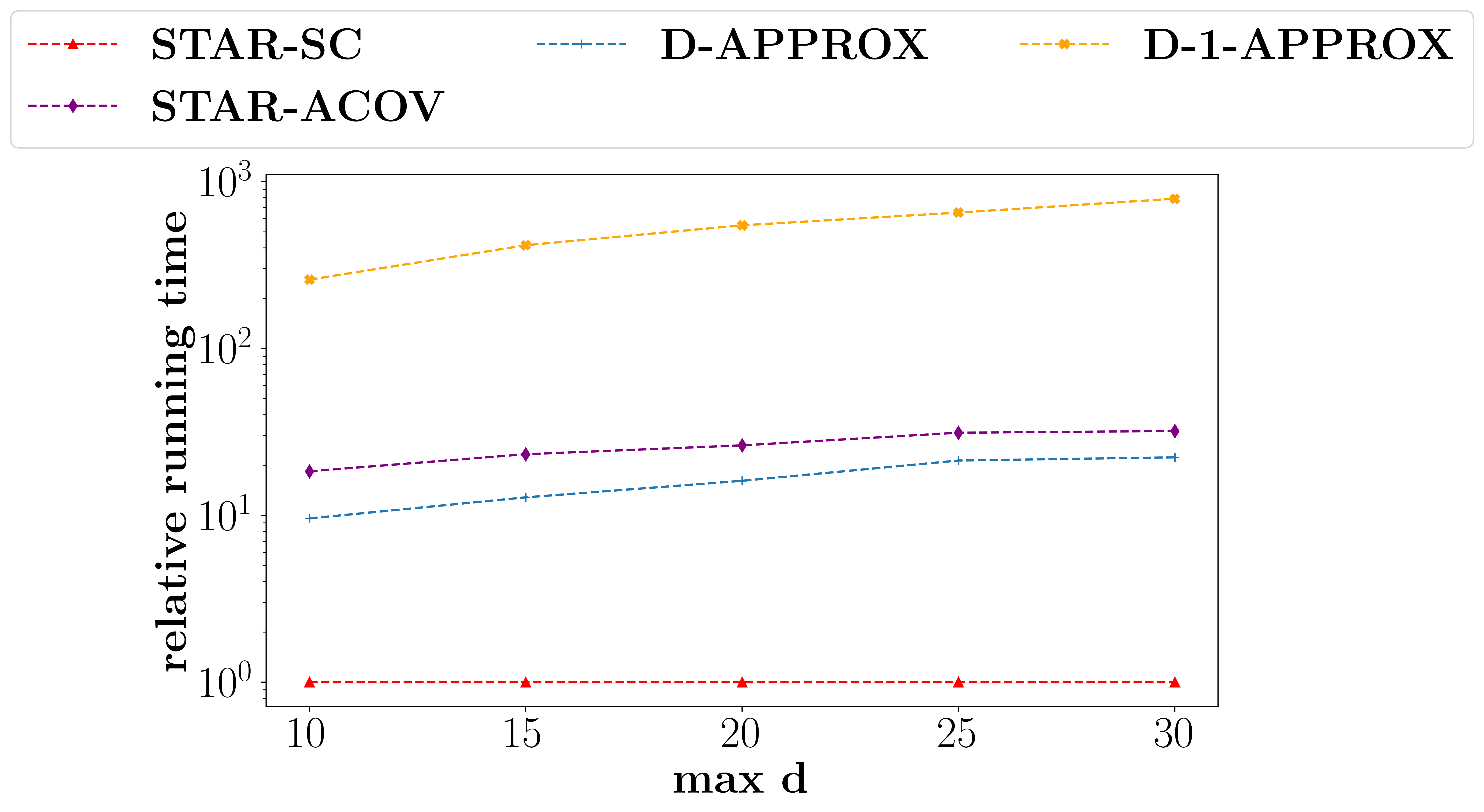}}
    \caption{Averaged results of $3$- and~$4$-TVC comparison of the proposed always star algorithms to always at most $d$ algorithms}
    \label{fig:exp_star:all}
     \vspace*{-0.4cm}
\end{figure}
In terms of running time \algdoneapprox{} takes by far the longest (an average of $28.475$ ms per instance), while the other algorithms are much faster. The average running times are~$0.054$ ms for \algstartriv{},~$0.878$ ms for \algdapprox{} and~$1.404$ ms for \algstaradv{}, which is completely within the expectations since the number of edges in the graphs increases as $d$ increases.
Overall, both \algstartriv{} and \algstaradv{} provide better solutions, and in shorter running time, than \algdoneapprox{}. While \algdapprox{} is faster than \algstaradv{}, its computed solutions are not competitive.
\vspace*{-1em}

\subsection{Experiments under the Condition~$\Delta>d$}
In terms of analysis \algdapprox{} and \algdoneapprox{} yield better worst case ratios than our algorithms when~$\Delta> d$. However, the worst case scenario especially for the \algstaradv{} algorithm is very specific. We argue that in most cases \algstaradv{} still outperforms known approximations, as it is specifically designed for always star temporal graphs and includes at most one vertex in any time step.
Therefore, this experiment tests the algorithms in the case where~$\Delta > d$.
To ensure this in any case, we compute~$20$-TVC.
Figure~\ref{fig:exp_star:smalld:ratio:all} shows that our expectation that \algstaradv{} computes the best results is true.
\algstartriv{} becomes better with the increase of the maximal degree. This can be explained as the increase of the degree of the instances leads to a larger optimal solution, because there are more possible combinations of edges in a window; on the other hand, the solution of \algstartriv{} consists still of every star center and its size stays the same.
\begin{figure}[!bt]
    \centering
    \subfloat[\footnotesize{Solution size comparison}\label{fig:exp_star:smalld:ratio}]{\includegraphics[width=0.49\textwidth]{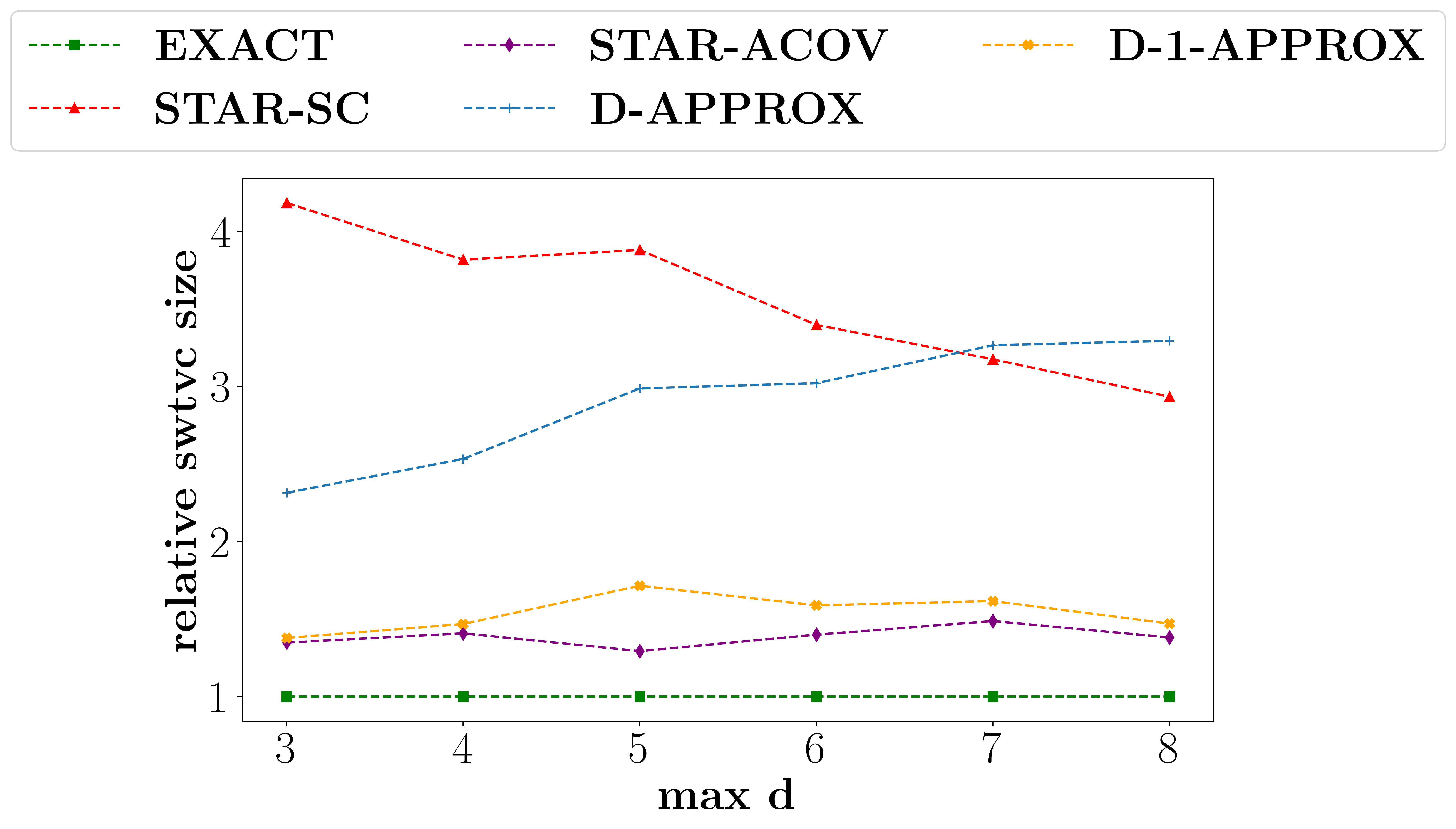}}
    \hfill
    \subfloat[\footnotesize{Running time comparison}\label{fig:exp_star:smalld:runtime}]{\includegraphics[width=0.49\textwidth]{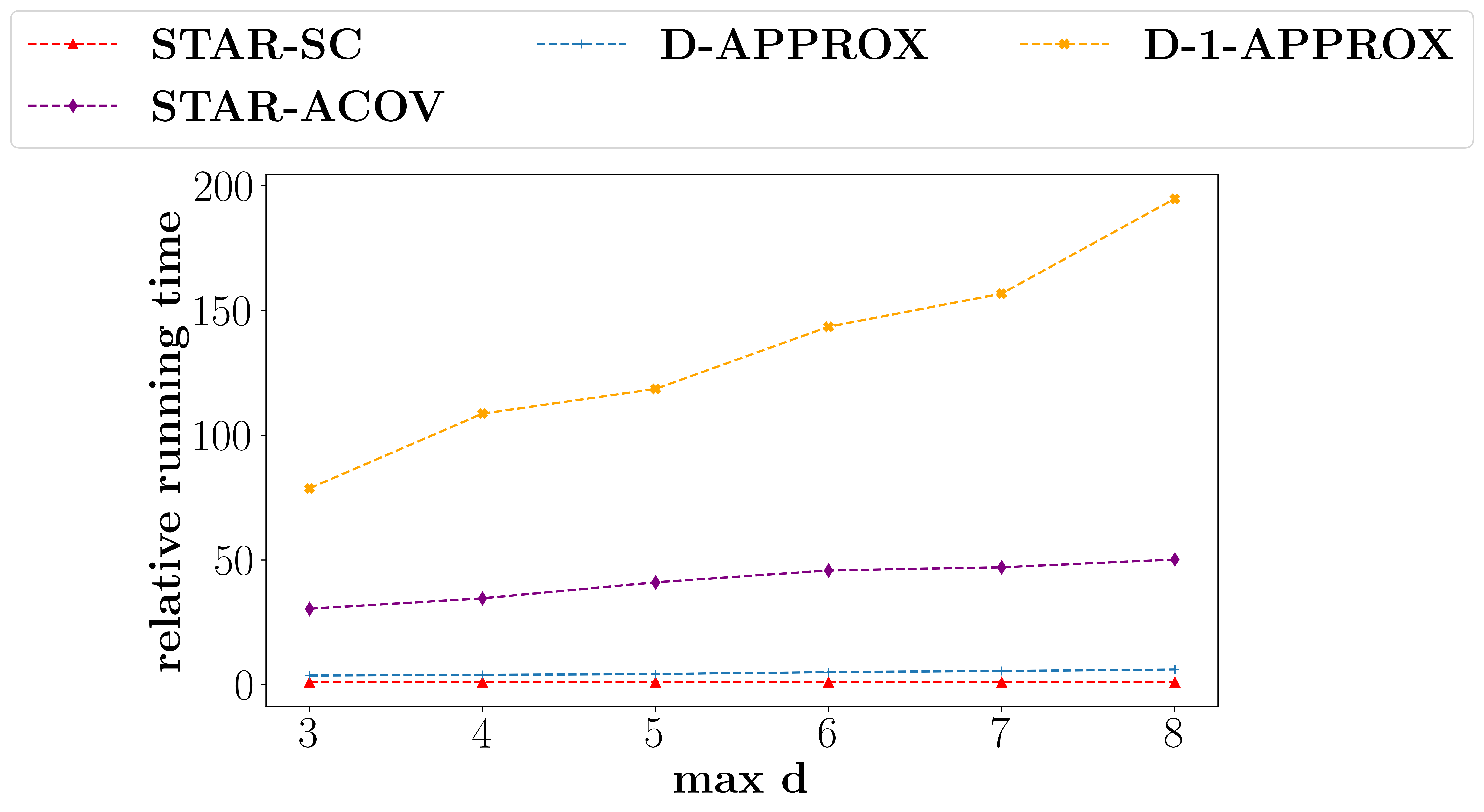}}
    \caption{Averaged results of $20$-TVC comparison of the proposed always star algorithms to always at most $d$ algorithms on small degree graphs}
    \label{fig:exp_star:smalld:ratio:all}
     \vspace*{-0.1cm}
\end{figure}
Similar to before, within the running time expectations, \algdoneapprox{} is the slowest taking averagely~$6.658$ ms per instance. \algstaradv{} follows with~$2.066$ ms, then \algdapprox{} with~$0.237$ ms and finally \hbox{\algstartriv{} with~$0.049$ ms.}
Clearly, \algstaradv{} outperforms \algdoneapprox{} in shorter running time.
However, we know that the approximation ratio of \algstaradv{} is met as shown in the proof of Theorem \ref{thm:star_advanced}. Hence, there exist instances where \algdoneapprox{} will compute better results. It could be interesting for future research to investigate if there is a cut-off point based on the ratio between~$\Delta$ and~$d$ where one algorithm computes smaller solutions. It could also be of interest to collect real life always temporal star instances and check which algorithm performs better on them.
\vspace*{-1em}

\subsection{Experiments on Large Instances}
The experiments in the previous subsection only work with small instances as they provide the exact solution as reference, which is not computable in a reasonable time for larger instances.
Therefore, in the next experiment we compare only the size of the approximations for SW-TVC.
On these inputs we compute~$16$-TVC. The detailed results are displayed in Table~\ref{tab:res_large:detail}.
Figure~\ref{fig:exp_star:large} shows, as expected, that \algdapprox{} performs worst with an average~$16$-TVC size of $\numprint{19582}$, then \algstartriv{} with average of $\numprint{7908}$. \algdoneapprox{} reaches an average size of $\numprint{3441}$, which is again surpassed by \algstaradv{} with a size of $\numprint{2450}$. This is an improvement of 40.46\% of \algstaradv{} compared to \algdoneapprox{}.
\begin{figure}[!bt]
    \centering
    \subfloat[\footnotesize{Solution size comparison}\label{fig:exp_star:large}]{\includegraphics[width=0.49\textwidth]{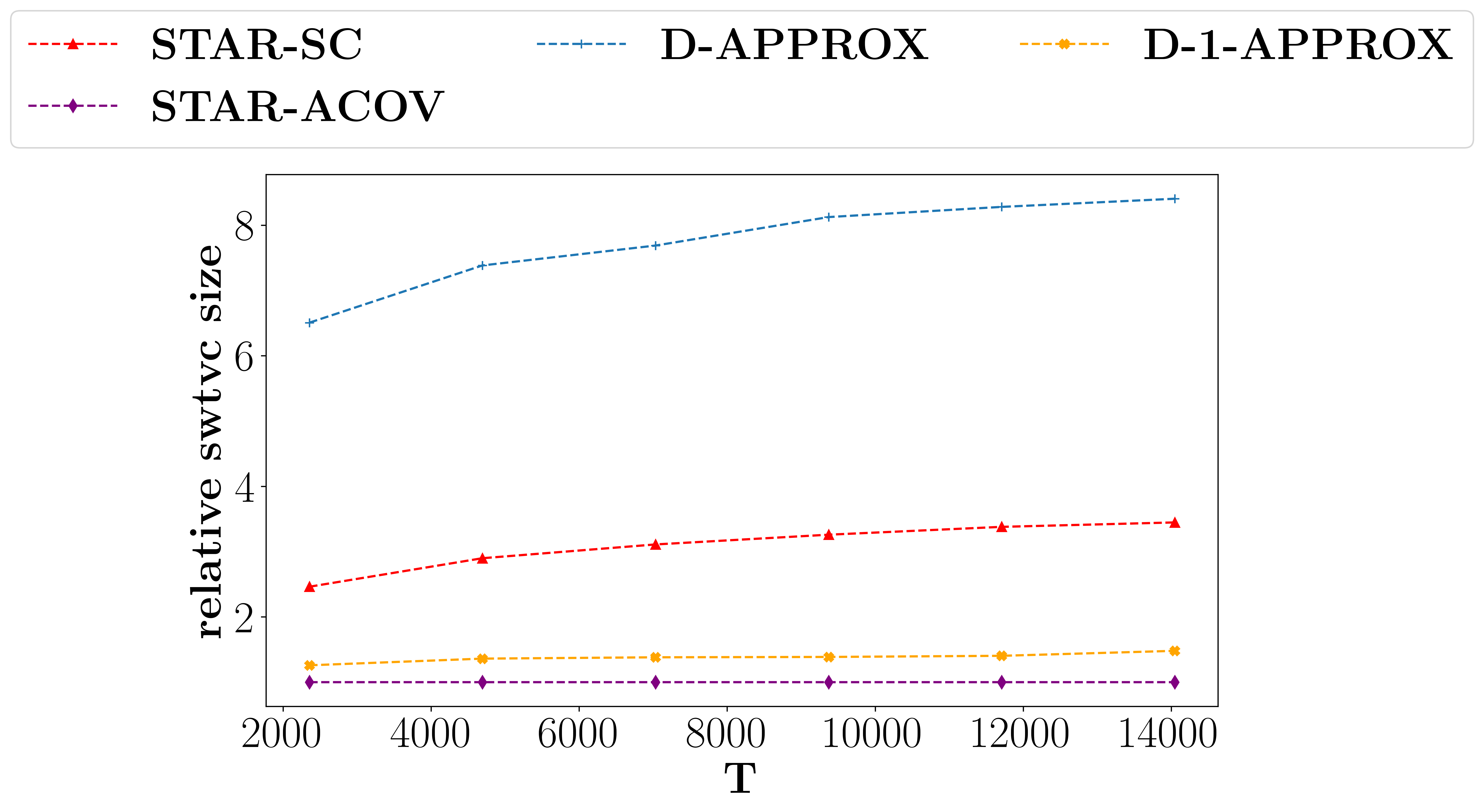}}
    \hfill
    \subfloat[\footnotesize{Running time comparison}\label{fig:exp_star:large:runtime}]{\includegraphics[width=0.49\textwidth]{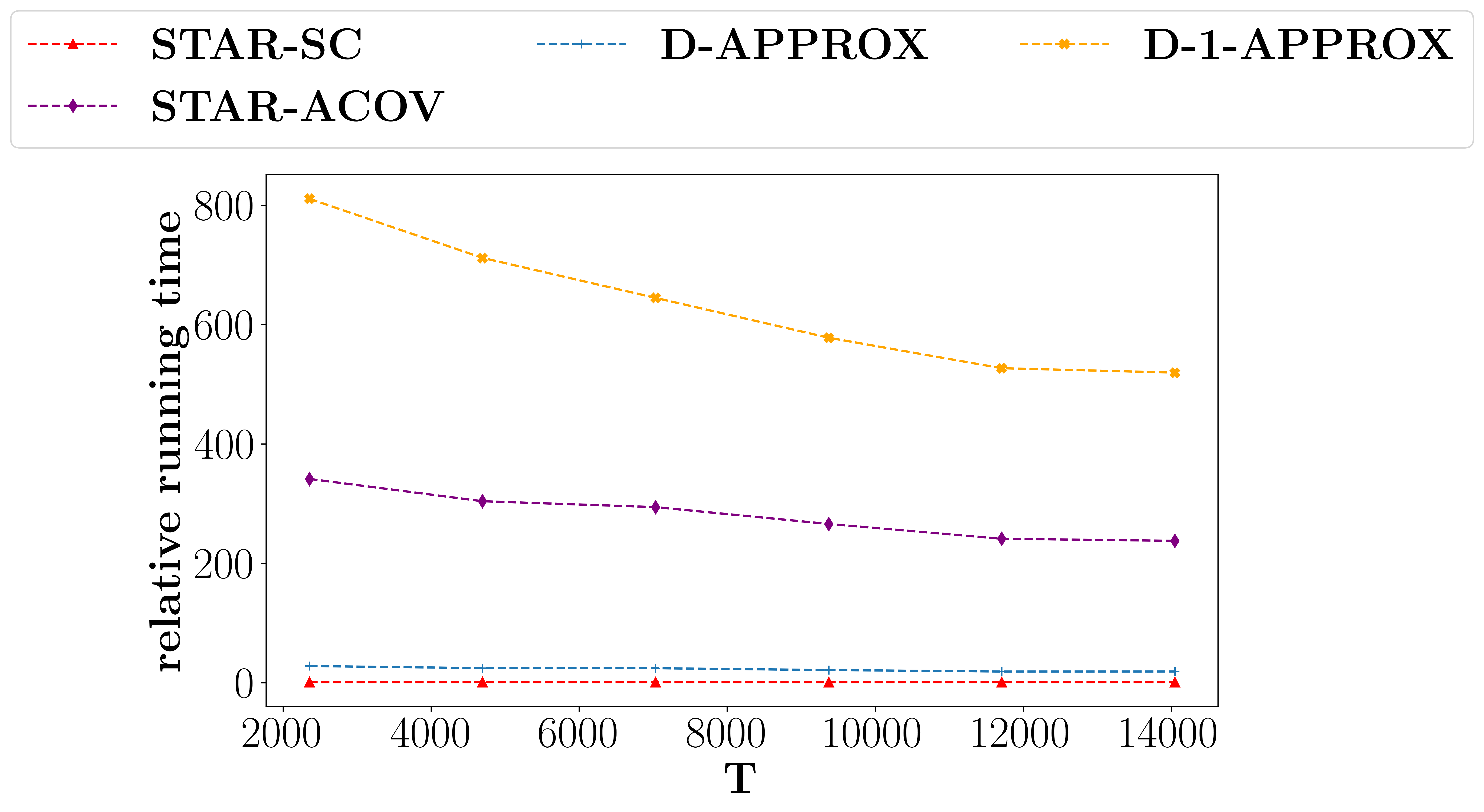}}
    \caption{Averaged results of $16$-TVC comparison of the proposed always star algorithms to always at most $d$ algorithms on larger graphs}
    \label{fig:exp_star:large:all}
     \vspace*{-0.3cm}
\end{figure}
The running times of this experiment is shown in Figure~\ref{fig:exp_star:large:runtime}. \algstartriv{} runs on average $1.821$ ms per instance, followed by the \algdapprox{} running in $245.483$ ms and \algstaradv{} in $3059.587$ ms. The longest running time per instance is needed by \algdoneapprox{} with on average of $6744.801$ ms per instance. Taking the improvement formula from above this leads to a time improvement of \algstaradv{} of 120.44\% against \algdoneapprox{}.

\vspace*{-1em}

\subsection{Experiments of the \algdoneapprox{} and \algdapprox{} on real-life data}\label{sec:reallifedata}

Recall that the previously known algorithms, which provide $d$ \cite{akrida_temporal_2020} and $d-1$ \cite{hamm_complexity_2022} approximations to SW-TVC in $\mathcal{O}(Tm)$ and $\mathcal{O}(T^2m^2)$, respectively, are designed to work for always degree at most $d$ temporal graphs. To the best of our knowledge, these algorithms have not been previously implemented or tested against each other. In this section, we test their performance on the real-life graphs, see Table~\ref{tab:exp:graphs:reallife}.
Figure~\ref{fig:reallife:all} shows the results of a~$64$-TVC computation. The detailed results can be found in Table~\ref{tab:res_reallife:detail}.
As expected \algdoneapprox{} provides better solutions in all cases and achieves an improvement of 11.58\% in solution size.
Rather surprisingly, \algdoneapprox{} is also faster than \algdapprox{} is most cases, achieving a time improvement of 858.35\%. This improvement in running time can be attributed to the fact that all the real-life graphs in the dataset used are sparse in terms of edge appearances. \algdoneapprox{} performs better on sparse graphs as it maintains the current uncovered edge appearances when computing the solution. The algorithm uses those to find uncovered paths of length two where the middle vertex is chosen for the cover.
\algdapprox{} as described in~\cite{akrida_temporal_2020}, on the other hand, checks for an appearance of every edge at any time step and is, therefore, not as well suited for sparse graphs. 

\vspace{-0.5cm}
\subparagraph*{Novel implementation.} We engineered \algdapprox{}~\cite{akrida_temporal_2020} to work faster on sparse graphs and in general, by improving the exact solving of a sparse single edge graph to work on the actual edge appearances instead. For this we exploit the way we store these graphs, i.e., with direct access to $\lambda$ for any edge, and we forward to the next window where the edge actually appears and therefore skip all time steps that do not contain an edge appearance\footnote{\tvcsolverrepo}. We show the results in Figure \ref{fig:reallife} as \algdsapprox. Results clearly show a spectacular improvement, about 260 times, in runtime
with the exact same solution compared to \algdapprox{}.

\begin{figure}[!bt]
    \centering
    \subfloat[\footnotesize{Solution size comparison}\label{fig:reallife}]{\includegraphics[width=0.49\textwidth]{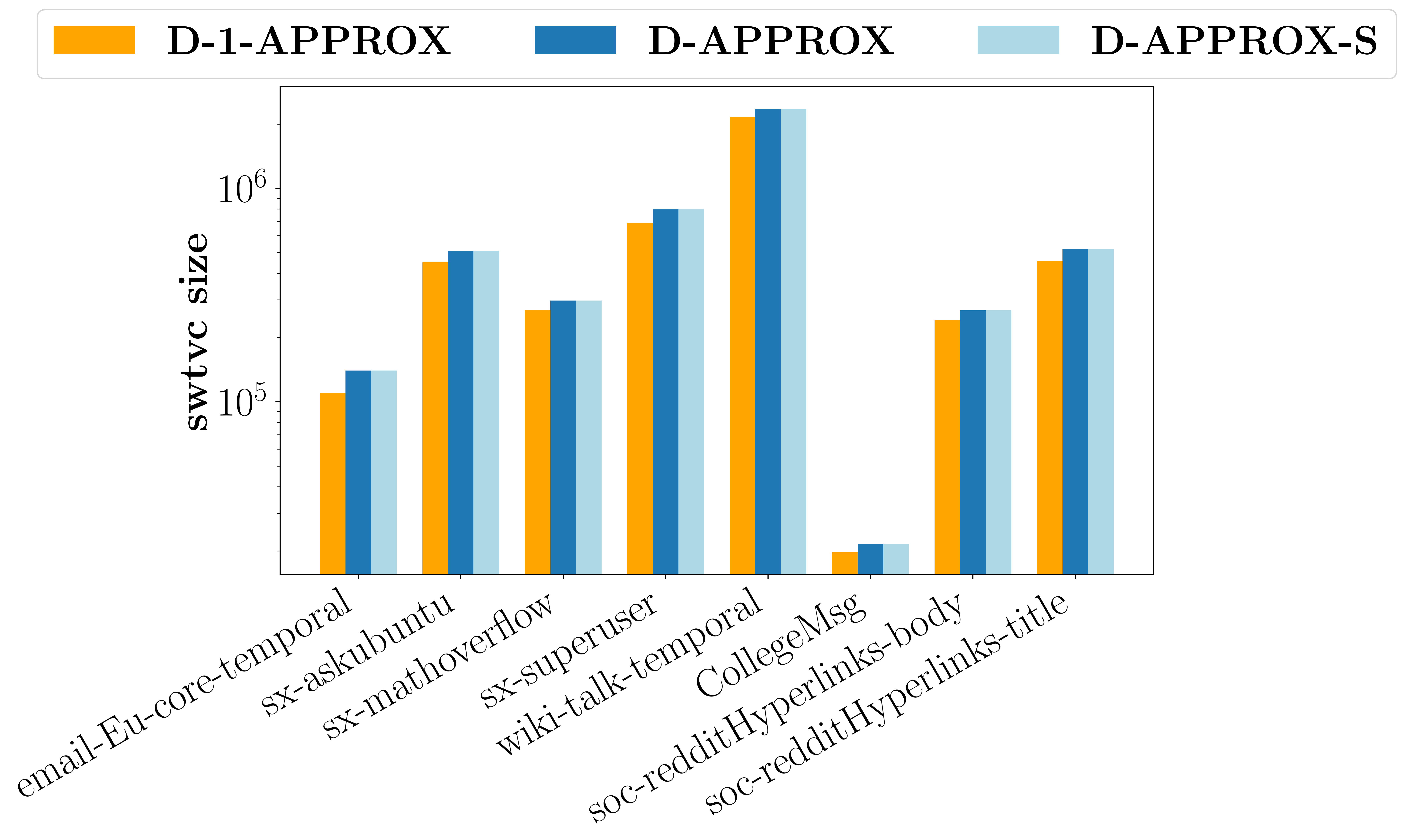}}
    \hfill
    \subfloat[\footnotesize{Running time comparison}\label{fig:reallife:time}]{\includegraphics[width=0.49\textwidth]{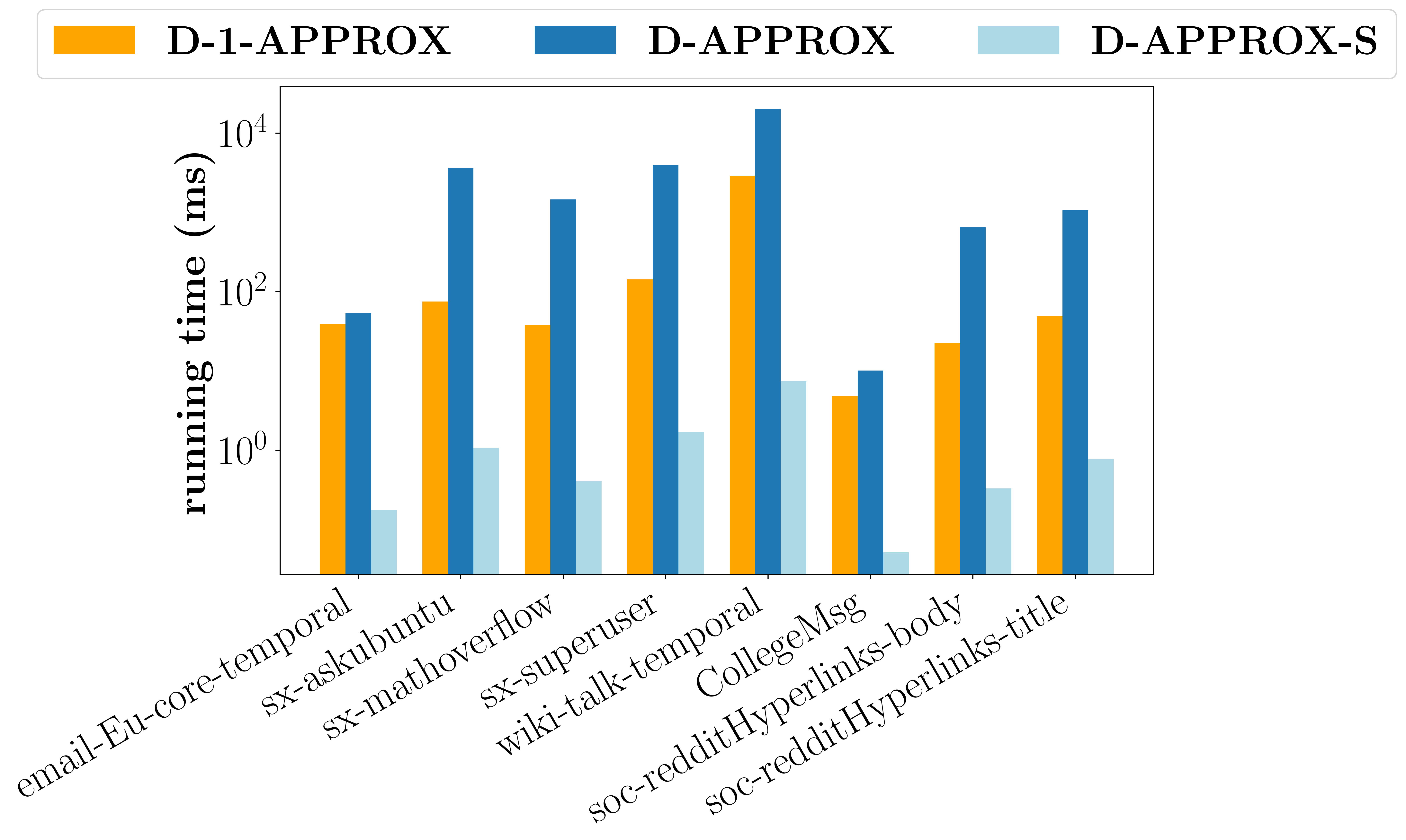}}
    \caption{$64$-TVC comparison of the \algdoneapprox{} and \algdapprox{} on real-life instances}
    \label{fig:reallife:all}
     \vspace*{-0.3cm}
\end{figure}
 
\section{Conclusion}
\label{sec:conclusion}

We introduced two new approximation algorithms for computing~SW-TVC on always star temporal graphs.
The first algorithm, \algstartriv{}, yields a~$2\Delta-1$ approximation ratio in~$\mathcal{O}(T)$ runtime. The second algorithm, \algstaradv{}, yields a~$\Delta-1$ approximation \hbox{ratio in~$\mathcal{O}(Tm\Delta^2)$ runtime.}
The experimental evaluation clearly shows that both proposed algorithms outperform the best previously known algorithm (\algdoneapprox{}) in terms of both solution size and runtime. \algstaradv{} achieves 40.46\% improvement in solution size and 120.44\% improvement in runtime compared to \algdoneapprox{} when computing~SW-TVC with $\Delta=16$ on larger instances.
On non restricted real-life instances, our results show that \algdoneapprox{} outperforms \algdapprox{} in both solution size and runtime. We devise a novel way of implementing the $d$-approximation algorithm, \algdsapprox{}, such that it runs much faster than both \algdoneapprox{} and \algdsapprox{}, while returning the same solution as \algdapprox{}, which is still slightly worse in terms of size than \algdoneapprox{}.

\bibliography{references}

\newpage
\appendix
\section{Detailed Results of the Experiments on Large Always Star Temporal Graphs}
\label{sec:further_res:large}
\setlength{\tabcolsep}{5pt} \renewcommand{\arraystretch}{0.86}
\begin{longtable}{l|rr|rr|rr|rr}
    \caption{This table shows the results for the $16$-TVC on larger always star temporal graphs. We compared the performance of the proposed \algstartriv\ and \algstaradv\ (Algorithm \ref{algo:star_advanced}) ($staradv$) to the known \algdapprox\ and \algdoneapprox\ algorithms. The graph instances are described in the form~$graphclass.maxd.T.n$ where $graphclass\in\{s,u\}$ for $s\widehat{=} star$ and $u\widehat{=} underlying\ star$. The results averaged over the three repetitions and the three instances with the same configuration except for the random seeds, see Table \ref{tab:exp:graphs:stars:config}. $\mid C\mid$ is the size of the computed cover and $t$ the computation time in ms. The smallest cover size and running time are highlighted.
}
    \label{tab:res_large:detail}                                                                                                                                                                                                                                              \\
    \hline
    \multicolumn{1}{c|}{}            & \multicolumn{2}{c|}{\algdapprox} & \multicolumn{2}{c|}{\algdoneapprox} & \multicolumn{2}{c|}{\algstartriv} & \multicolumn{2}{c}{\algstaradv}                                                                                                             \\

    Instances                            & $\mid C\mid$               & $t$                           & $\mid C\mid$                & $t$                         & $\mid C\mid$            & $t$                        & $\mid C\mid$            & $t$                  \\
    \hline
    \hline
s.100.11707.1024  &  \numprint{493114}&\numprint{11215.2} & \numprint{13547}&\numprint{45805.3} & \textbf{\numprint{11706}}&\textbf{\numprint{11.6}} & \textbf{\numprint{11706}}&\numprint{18884.1} \\
\hline
s.50.11707.1024  &  \numprint{263994}&\numprint{6866.2} & \numprint{12828}&\numprint{20848.3} & \textbf{\numprint{11706}}&\textbf{\numprint{11.4}} & \textbf{\numprint{11706}}&\numprint{12212.4} \\
\hline
s.5.11707.1024  &  \numprint{33409}&\numprint{3103.3} & \numprint{12847}&\numprint{1720.6} & \textbf{\numprint{11706}}&\textbf{\numprint{8.8}} & \textbf{\numprint{11706}}&\numprint{5533.0} \\
\hline
s.100.11707.4096  &  \numprint{488701}&\numprint{11175.9} & \numprint{13586}&\numprint{45033.5} & \textbf{\numprint{11706}}&\textbf{\numprint{13.7}} & \textbf{\numprint{11706}}&\numprint{19130.3} \\
\hline
s.50.11707.4096  &  \numprint{261683}&\numprint{6934.5} & \numprint{12874}&\numprint{20778.1} & \textbf{\numprint{11706}}&\textbf{\numprint{11.4}} & \textbf{\numprint{11706}}&\numprint{12080.1} \\
\hline
s.5.11707.4096  &  \numprint{33356}&\numprint{3069.0} & \numprint{12801}&\numprint{1711.8} & \textbf{\numprint{11706}}&\textbf{\numprint{9.2}} & \textbf{\numprint{11706}}&\numprint{5638.1} \\
\hline
s.100.14045.1024  &  \numprint{590472}&\numprint{13690.0} & \numprint{16250}&\numprint{54741.0} & \textbf{\numprint{14045}}&\textbf{\numprint{13.8}} & \textbf{\numprint{14045}}&\numprint{23276.7} \\
\hline
s.50.14045.1024  &  \numprint{316136}&\numprint{8427.9} & \numprint{15402}&\numprint{24591.2} & \textbf{\numprint{14045}}&\textbf{\numprint{13.5}} & \textbf{\numprint{14045}}&\numprint{14632.8} \\
\hline
s.5.14045.1024  &  \numprint{39998}&\numprint{3958.4} & \numprint{15402}&\numprint{2038.0} & \textbf{\numprint{14045}}&\textbf{\numprint{10.6}} & \textbf{\numprint{14045}}&\numprint{6957.6} \\
\hline
s.100.14045.4096  &  \numprint{583799}&\numprint{13686.5} & \numprint{16282}&\numprint{55818.6} & \textbf{\numprint{14045}}&\textbf{\numprint{14.0}} & \textbf{\numprint{14045}}&\numprint{23388.7} \\
\hline
s.50.14045.4096  &  \numprint{312761}&\numprint{8550.1} & \numprint{15431}&\numprint{25304.7} & \textbf{\numprint{14045}}&\textbf{\numprint{13.7}} & \textbf{\numprint{14045}}&\numprint{15095.2} \\
\hline
s.5.14045.4096  &  \numprint{39919}&\numprint{4010.5} & \numprint{15345}&\numprint{2066.3} & \textbf{\numprint{14045}}&\textbf{\numprint{11.2}} & \textbf{\numprint{14045}}&\numprint{6901.8} \\
\hline
s.100.2354.1024  &  \numprint{99774}&\numprint{1657.2} & \numprint{2726}&\numprint{9391.3} & \textbf{\numprint{2354}}&\textbf{\numprint{1.9}} & \textbf{\numprint{2354}}&\numprint{2817.5} \\
\hline
s.50.2354.1024  &  \numprint{53257}&\numprint{973.7} & \numprint{2575}&\numprint{4042.9} & \textbf{\numprint{2354}}&\textbf{\numprint{1.9}} & \textbf{\numprint{2354}}&\numprint{1706.4} \\
\hline
s.5.2354.1024  &  \numprint{6754}&\numprint{358.8} & \numprint{2597}&\numprint{352.0} & \textbf{\numprint{2354}}&\textbf{\numprint{1.8}} & \textbf{\numprint{2354}}&\numprint{640.6} \\
\hline
s.100.2354.4096  &  \numprint{99622}&\numprint{1627.2} & \numprint{2725}&\numprint{9084.7} & \textbf{\numprint{2354}}&\textbf{\numprint{1.9}} & \textbf{\numprint{2354}}&\numprint{2790.7} \\
\hline
s.50.2354.4096  &  \numprint{53243}&\numprint{1020.7} & \numprint{2598}&\numprint{4255.1} & \textbf{\numprint{2354}}&\textbf{\numprint{1.9}} & \textbf{\numprint{2354}}&\numprint{1745.8} \\
\hline
s.5.2354.4096  &  \numprint{6768}&\numprint{355.9} & \numprint{2569}&\numprint{339.3} & \textbf{\numprint{2354}}&\textbf{\numprint{1.8}} & \textbf{\numprint{2354}}&\numprint{624.6} \\
\hline
s.100.4692.1024  &  \numprint{199350}&\numprint{3638.2} & \numprint{5416}&\numprint{18067.7} & \textbf{\numprint{4692}}&\textbf{\numprint{3.9}} & \textbf{\numprint{4692}}&\numprint{6276.6} \\
\hline
s.50.4692.1024  &  \numprint{106503}&\numprint{2359.1} & \numprint{5125}&\numprint{8421.6} & \textbf{\numprint{4692}}&\textbf{\numprint{3.8}} & \textbf{\numprint{4692}}&\numprint{4105.5} \\
\hline
s.5.4692.1024  &  \numprint{13431}&\numprint{944.1} & \numprint{5154}&\numprint{710.2} & \textbf{\numprint{4692}}&\textbf{\numprint{3.6}} & \textbf{\numprint{4692}}&\numprint{1662.9} \\
\hline
s.100.4692.4096  &  \numprint{197276}&\numprint{3802.7} & \numprint{5437}&\numprint{18849.2} & \textbf{\numprint{4692}}&\textbf{\numprint{4.1}} & \textbf{\numprint{4692}}&\numprint{6680.7} \\
\hline
s.50.4692.4096  &  \numprint{105644}&\numprint{2325.9} & \numprint{5162}&\numprint{8374.4} & \textbf{\numprint{4692}}&\textbf{\numprint{3.7}} & \textbf{\numprint{4692}}&\numprint{3940.6} \\
\hline
s.5.4692.4096  &  \numprint{13450}&\numprint{954.0} & \numprint{5129}&\numprint{696.1} & \textbf{\numprint{4692}}&\textbf{\numprint{3.9}} & \textbf{\numprint{4692}}&\numprint{1659.4} \\
\hline
s.100.7030.1024  &  \numprint{298069}&\numprint{5945.0} & \numprint{8109}&\numprint{27045.8} & \textbf{\numprint{7030}}&\textbf{\numprint{6.2}} & \textbf{\numprint{7030}}&\numprint{10145.5} \\
\hline
s.50.7030.1024  &  \numprint{159322}&\numprint{3718.0} & \numprint{7700}&\numprint{12254.5} & \textbf{\numprint{7030}}&\textbf{\numprint{5.6}} & \textbf{\numprint{7030}}&\numprint{6304.5} \\
\hline
s.5.7030.1024  &  \numprint{20109}&\numprint{1586.1} & \numprint{7719}&\numprint{1024.7} & \textbf{\numprint{7030}}&\textbf{\numprint{5.5}} & \textbf{\numprint{7030}}&\numprint{2850.0} \\
\hline
s.100.7030.4096  &  \numprint{293571}&\numprint{5885.5} & \numprint{8162}&\numprint{27007.7} & \textbf{\numprint{7030}}&\textbf{\numprint{5.7}} & \textbf{\numprint{7030}}&\numprint{10522.7} \\
\hline
s.50.7030.4096  &  \numprint{157258}&\numprint{3825.9} & \numprint{7733}&\numprint{12532.5} & \textbf{\numprint{7030}}&\textbf{\numprint{5.5}} & \textbf{\numprint{7030}}&\numprint{6493.1} \\
\hline
s.5.7030.4096  &  \numprint{20032}&\numprint{1639.5} & \numprint{7674}&\numprint{1046.0} & \textbf{\numprint{7030}}&\textbf{\numprint{5.1}} & \textbf{\numprint{7030}}&\numprint{2855.6} \\
\hline
s.100.9369.1024  &  \numprint{396906}&\numprint{8467.7} & \numprint{10837}&\numprint{36032.4} & \textbf{\numprint{9369}}&\textbf{\numprint{9.1}} & \textbf{\numprint{9369}}&\numprint{14456.0} \\
\hline
s.50.9369.1024  &  \numprint{212241}&\numprint{5179.0} & \numprint{10257}&\numprint{16685.3} & \textbf{\numprint{9369}}&\textbf{\numprint{8.8}} & \textbf{\numprint{9369}}&\numprint{8919.8} \\
\hline
s.5.9369.1024  &  \numprint{26800}&\numprint{2386.1} & \numprint{10283}&\numprint{1369.2} & \textbf{\numprint{9369}}&\textbf{\numprint{7.6}} & \textbf{\numprint{9369}}&\numprint{4142.2} \\
\hline
s.100.9369.4096  &  \numprint{392207}&\numprint{8589.3} & \numprint{10876}&\numprint{36938.3} & \textbf{\numprint{9369}}&\textbf{\numprint{9.1}} & \textbf{\numprint{9369}}&\numprint{14512.0} \\
\hline
s.50.9369.4096  &  \numprint{209992}&\numprint{5542.0} & \numprint{10318}&\numprint{17062.7} & \textbf{\numprint{9369}}&\textbf{\numprint{9.2}} & \textbf{\numprint{9369}}&\numprint{9431.2} \\
\hline
s.5.9369.4096  &  \numprint{26738}&\numprint{2316.6} & \numprint{10242}&\numprint{1342.3} & \textbf{\numprint{9369}}&\textbf{\numprint{7.5}} & \textbf{\numprint{9369}}&\numprint{4197.0} \\
\hline
u.100.11707.1024  &  \numprint{27025}&\numprint{60.9} & \numprint{2773}&\numprint{17897.4} & \numprint{11694}&\textbf{\numprint{38.7}} & \textbf{\numprint{1303}}&\numprint{5968.8} \\
\hline
u.50.11707.1024  &  \numprint{14302}&\textbf{\numprint{30.9}} & \numprint{2106}&\numprint{10325.6} & \numprint{11689}&\numprint{37.1} & \textbf{\numprint{1037}}&\numprint{3225.2} \\
\hline
u.5.11707.1024  &  \numprint{1809}&\textbf{\numprint{5.6}} & \numprint{757}&\numprint{2304.2} & \numprint{10881}&\numprint{30.5} & \textbf{\numprint{719}}&\numprint{396.8} \\
\hline
u.100.11707.4096  &  \numprint{5926}&\numprint{53.8} & \numprint{2972}&\numprint{21669.7} & \numprint{11690}&\textbf{\numprint{37.0}} & \textbf{\numprint{1325}}&\numprint{6214.6} \\
\hline
u.50.11707.4096  &  \numprint{4864}&\textbf{\numprint{27.9}} & \numprint{2362}&\numprint{13296.9} & \numprint{11575}&\numprint{33.8} & \textbf{\numprint{1056}}&\numprint{2940.9} \\
\hline
u.5.11707.4096  &  \numprint{1246}&\textbf{\numprint{5.5}} & \numprint{1020}&\numprint{4096.2} & \numprint{10832}&\numprint{29.1} & \textbf{\numprint{720}}&\numprint{408.3} \\
\hline
u.100.14045.1024  &  \numprint{31462}&\numprint{71.4} & \numprint{3504}&\numprint{24929.3} & \numprint{14029}&\textbf{\numprint{49.1}} & \textbf{\numprint{1456}}&\numprint{7255.3} \\
\hline
u.50.14045.1024  &  \numprint{16048}&\textbf{\numprint{35.6}} & \numprint{2739}&\numprint{14396.9} & \numprint{13971}&\numprint{49.0} & \textbf{\numprint{1192}}&\numprint{3641.5} \\
\hline
u.5.14045.1024  &  \numprint{1609}&\textbf{\numprint{6.5}} & \numprint{882}&\numprint{1851.7} & \numprint{10122}&\numprint{27.8} & \textbf{\numprint{679}}&\numprint{353.3} \\
\hline
u.100.14045.4096  &  \numprint{6876}&\numprint{61.0} & \numprint{3376}&\numprint{23772.1} & \numprint{14025}&\textbf{\numprint{48.3}} & \textbf{\numprint{1475}}&\numprint{7340.3} \\
\hline
u.50.14045.4096  &  \numprint{5518}&\textbf{\numprint{32.8}} & \numprint{2883}&\numprint{15264.7} & \numprint{13958}&\numprint{47.3} & \textbf{\numprint{1197}}&\numprint{3618.4} \\
\hline
u.5.14045.4096  &  \numprint{2009}&\textbf{\numprint{6.7}} & \numprint{1327}&\numprint{7310.5} & \numprint{13219}&\numprint{41.4} & \textbf{\numprint{868}}&\numprint{517.5} \\
\hline
u.100.2354.1024  &  \numprint{5600}&\numprint{12.0} & \numprint{1067}&\numprint{4082.3} & \numprint{2352}&\textbf{\numprint{2.7}} & \textbf{\numprint{653}}&\numprint{1178.5} \\
\hline
u.50.2354.1024  &  \numprint{2962}&\numprint{6.4} & \numprint{711}&\numprint{2088.0} & \numprint{2328}&\textbf{\numprint{2.7}} & \textbf{\numprint{442}}&\numprint{627.9} \\
\hline
u.5.2354.1024  &  \numprint{291}&\textbf{\numprint{1.2}} & \numprint{167}&\numprint{167.4} & \numprint{1886}&\numprint{1.9} & \textbf{\numprint{156}}&\numprint{65.3} \\
\hline
u.100.2354.4096  &  \numprint{1597}&\numprint{10.9} & \numprint{993}&\numprint{3623.2} & \numprint{2352}&\textbf{\numprint{2.8}} & \textbf{\numprint{668}}&\numprint{1183.6} \\
\hline
u.50.2354.4096  &  \numprint{1239}&\numprint{5.9} & \numprint{674}&\numprint{2109.7} & \numprint{2333}&\textbf{\numprint{2.8}} & \textbf{\numprint{456}}&\numprint{600.4} \\
\hline
u.5.2354.4096  &  \numprint{273}&\textbf{\numprint{1.3}} & \numprint{222}&\numprint{479.0} & \numprint{2033}&\numprint{2.2} & \textbf{\numprint{167}}&\numprint{85.1} \\
\hline
u.100.4692.1024  &  \numprint{9879}&\numprint{23.8} & \numprint{1563}&\numprint{6094.6} & \numprint{4691}&\textbf{\numprint{9.6}} & \textbf{\numprint{844}}&\numprint{2257.7} \\
\hline
u.50.4692.1024  &  \numprint{4975}&\numprint{14.0} & \numprint{1108}&\numprint{3503.8} & \numprint{4691}&\textbf{\numprint{8.1}} & \textbf{\numprint{582}}&\numprint{1215.1} \\
\hline
u.5.4692.1024  &  \numprint{532}&\textbf{\numprint{2.2}} & \numprint{309}&\numprint{386.2} & \numprint{4132}&\numprint{6.9} & \textbf{\numprint{303}}&\numprint{129.6} \\
\hline
u.100.4692.4096  &  \numprint{2958}&\numprint{21.1} & \numprint{1566}&\numprint{9346.7} & \numprint{4677}&\textbf{\numprint{8.4}} & \textbf{\numprint{866}}&\numprint{2567.6} \\
\hline
u.50.4692.4096  &  \numprint{2423}&\numprint{11.4} & \numprint{1172}&\numprint{6318.8} & \numprint{4667}&\textbf{\numprint{8.3}} & \textbf{\numprint{601}}&\numprint{1335.3} \\
\hline
u.5.4692.4096  &  \numprint{642}&\textbf{\numprint{2.5}} & \numprint{478}&\numprint{1815.4} & \numprint{3836}&\numprint{6.4} & \textbf{\numprint{277}}&\numprint{169.6} \\
\hline
u.100.7030.1024  &  \numprint{16475}&\numprint{35.9} & \numprint{2157}&\numprint{13745.0} & \numprint{7017}&\textbf{\numprint{15.8}} & \textbf{\numprint{1038}}&\numprint{3723.7} \\
\hline
u.50.7030.1024  &  \numprint{8220}&\numprint{17.9} & \numprint{1628}&\numprint{7208.9} & \numprint{6987}&\textbf{\numprint{15.3}} & \textbf{\numprint{749}}&\numprint{1767.3} \\
\hline
u.5.7030.1024  &  \numprint{653}&\textbf{\numprint{3.3}} & \numprint{369}&\numprint{570.5} & \numprint{4810}&\numprint{8.4} & \textbf{\numprint{341}}&\numprint{163.2} \\
\hline
u.100.7030.4096  &  \numprint{4092}&\numprint{30.6} & \numprint{1972}&\numprint{12061.2} & \numprint{7006}&\textbf{\numprint{14.0}} & \textbf{\numprint{986}}&\numprint{3723.9} \\
\hline
u.50.7030.4096  &  \numprint{3246}&\numprint{16.5} & \numprint{1458}&\numprint{7583.9} & \numprint{6904}&\textbf{\numprint{14.4}} & \textbf{\numprint{749}}&\numprint{1900.9} \\
\hline
u.5.7030.4096  &  \numprint{794}&\textbf{\numprint{3.8}} & \numprint{549}&\numprint{1703.2} & \numprint{5695}&\numprint{9.8} & \textbf{\numprint{400}}&\numprint{181.9} \\
\hline
u.100.9369.1024  &  \numprint{20421}&\numprint{46.8} & \numprint{2382}&\numprint{15480.3} & \numprint{9241}&\textbf{\numprint{22.0}} & \textbf{\numprint{1173}}&\numprint{4729.2} \\
\hline
u.50.9369.1024  &  \numprint{10909}&\numprint{24.7} & \numprint{1785}&\numprint{9519.1} & \numprint{9185}&\textbf{\numprint{22.5}} & \textbf{\numprint{869}}&\numprint{2513.2} \\
\hline
u.5.9369.1024  &  \numprint{1179}&\textbf{\numprint{4.4}} & \numprint{478}&\numprint{1478.7} & \numprint{6771}&\numprint{16.3} & \textbf{\numprint{465}}&\numprint{299.7} \\
\hline
u.100.9369.4096  &  \numprint{5087}&\numprint{41.6} & \numprint{2589}&\numprint{17845.6} & \numprint{9306}&\textbf{\numprint{23.5}} & \textbf{\numprint{1169}}&\numprint{4683.4} \\
\hline
u.50.9369.4096  &  \numprint{3922}&\textbf{\numprint{22.0}} & \numprint{2098}&\numprint{10990.7} & \numprint{9297}&\numprint{22.7} & \textbf{\numprint{905}}&\numprint{2423.0} \\
\hline
u.5.9369.4096  &  \numprint{795}&\textbf{\numprint{4.5}} & \numprint{624}&\numprint{1387.7} & \numprint{7566}&\numprint{18.1} & \textbf{\numprint{519}}&\numprint{279.6} \\
\hline
\end{longtable}
\setlength{\tabcolsep}{6pt} \renewcommand{\arraystretch}{1}
 
\section{Detailed Results of the Experiments on Real-Life Instances}
\label{sec:further_res:reallife}
\setlength{\tabcolsep}{4pt} \renewcommand{\arraystretch}{0.86}
\begin{longtable}{l|rr|rr|rr}
    \caption{This table contains the comparison the known \algdapprox\ and \algdoneapprox\ and the engineered \algdsapprox\ algorithms for the $64$-TVC  on real life instances. The results averaged over three runs. $\mid C\mid$ is the size of the computed cover and $t$ the computation time in ms. The smallest cover size and running time are highlighted.}
    \label{tab:res_reallife:detail}                                                                                                                         \\
    \hline
        \multicolumn{1}{c|}{}        & \multicolumn{2}{c|}{\algdapprox} & \multicolumn{2}{c}{\algdoneapprox} & \multicolumn{2}{c}{\algdsapprox}                                                              \\
    
        Graph                        & $\mid C\mid$              & $t$                          & $\mid C\mid$             & $t$  & $\mid C\mid$              & $t$                           \\
        \hline
        \hline
CollegeMsg  &  \numprint{7212}&\numprint{3411.094} & \textbf{\numprint{6564}}&\numprint{1598.437} & \numprint{7212}&\textbf{\numprint{16.703}} \\
\hline
email-Eu-core-temporal  &  \numprint{46693}&\numprint{18354.806} & \textbf{\numprint{36557}}&\numprint{13183.621} & \numprint{46693}&\textbf{\numprint{59.364}} \\
\hline
soc-redditHyperlinks-body  &  \numprint{89494}&\numprint{218184.663} & \textbf{\numprint{80832}}&\numprint{7007.208} & \numprint{89494}&\textbf{\numprint{110.038}} \\
\hline
soc-redditHyperlinks-title  &  \numprint{173843}&\numprint{346342.450} & \textbf{\numprint{153011}}&\numprint{16296.375} & \numprint{173843}&\textbf{\numprint{263.819}} \\
\hline
sx-askubuntu  &  \numprint{169439}&\numprint{1203780.604} & \textbf{\numprint{150097}}&\numprint{26270.006} & \numprint{169439}&\textbf{\numprint{353.491}} \\
\hline
sx-mathoverflow  &  \numprint{99488}&\numprint{471242.940} & \textbf{\numprint{89729}}&\numprint{12575.524} & \numprint{99488}&\textbf{\numprint{139.119}} \\
\hline
sx-superuser  &  \numprint{265889}&\numprint{1301267.243} & \textbf{\numprint{229535}}&\numprint{47844.284} & \numprint{265889}&\textbf{\numprint{557.358}} \\
\hline
wiki-talk-temporal  &  \numprint{785098}&\numprint{6820182.412} & \textbf{\numprint{720962}}&\numprint{956878.618} & \numprint{785098}&\textbf{\numprint{2653.711}} \\
\hline
\end{longtable}
\normalsize

\end{document}